\newcommand\mcD{{\mathcal D}}
\newcommand\mcH{{\mathcal H}}
\newcommand\mcL{{\mathcal L}}
\newcommand\bm{{\bf m}}
\newcommand\mcN{{\mathcal N}}
\newcommand\mcS{{\mathcal S}}
\newcommand\bx{{\bf x}}
\newcommand\cH{{\mathcal H}}
\newcommand\mbE{{\mathbb E}}
\newcommand\mbR{{\mathbb R}}
\newcommand\mbP{{\mathbb P}}
\DeclareMathOperator\Cov{Cov}
\DeclareMathOperator\Span{span}
\newcommand{\iid}{\stackrel{iid}{\sim}}
\newtheorem{theorem}{Theorem}
\newtheorem{lemma}{Lemma}
\newtheorem{corol}{Corollary}
\newtheorem{assumption}{Assumption}
\title[Confidence Regions and Bands for FDA]{A Geometric Approach to Confidence Regions \\ and Bands for Functional Parameters}
\author{Hyunphil Choi}
\address{The Pennsylvania State University, University Park, PA, USA.}
\author[Choi and Reimherr]{Matthew Reimherr}\coaddress{Matthew Reimherr, Department of Statistics, The Pennsylvania State University, 411 Thomas Building, University Park, PA 16802, USA. \textit{mreimherr@psu.edu}}
\address{The Pennsylvania State University, University Park, PA, USA.}
\begin{document}

\begin{abstract}
	Functional data analysis, FDA, is now a well established discipline of statistics, with its core concepts and perspectives in place. Despite this, there are still fundamental statistical questions which have received relatively little attention. One of these is the systematic construction of confidence regions for functional parameters.  This work is concerned with developing, understanding, and visualizing such regions. We provide a general strategy for constructing confidence regions in a real separable Hilbert space using hyper-ellipsoids and hyper-rectangles. We then propose specific implementations which work especially well in practice. They provide powerful hypothesis tests and useful visualization tools without using any simulation.
	We also demonstrate the negative result that nearly all regions, including our own, have \textit{zero-coverage} when working with empirical covariances.  To overcome this challenge we propose a new paradigm for evaluating confidence regions by showing that the distance between an estimated region and the desired region (with proper coverage) tends to zero faster than the regions shrink to a point. We call this phenomena \textit{ghosting} and refer to the empirical regions as \textit{ghost} regions. 
	We illustrate the proposed methods in a simulation study and an application to fractional anisotropy tract profile data. 
\end{abstract}

\keywords{Functional Data Analysis, Confidence Regions, Confidence Bands, Hypothesis Testing, Principal Component Analysis}

\section{Introduction}

Functional data analysis, FDA, is a branch of statistics whose foundational work goes back at least two decades.  Its development and application has seen a precipitous increase in recent years due to the emergence of new data gathering technologies which incorporate high frequency sampling.  Fundamentally, FDA is concerned with data which can be viewed as samples of curves, images, shapes, or surfaces.  
While FDA is now a well established discipline with its core tools and concepts in place, there are still fundamental questions that have received relatively little attention.  This work is concerned with developing, understanding, and visualizing confidence regions for functional data, a fundamental statistical concept which has received little attention in the FDA literature.  Our approach is geometric in that we start with general hyper-ellipsoids and hyper-rectangles and show how they can be tailored to become proper confidence regions.  A distinguishing feature of functional confidence regions is that, when the covariance of the estimator is estimated, nearly all confidence regions turn out to have \textit{zero-coverage} for the parameter; 
this is primarily due to the infinite dimensional nature of the parameter.  However, we demonstrate how most of these regions 
are very close to the proper regions with respect to Hausdorff distance. Such an issue does not occur in multivariate statistics and is a distinct feature of FDA.
We refer to this phenomenon as \textit{ghosting}, namely, that while one uses a confidence regions with  zero-coverage, they can be shown to be arbitrarily close to a proper confidence region with the desired coverage. Of course, for these \textit{ghost} regions to be useful, these distances must decrease faster than the rate at which the regions shrink down to a point.

Forming a confidence region for a functional parameter can equivalently be thought of as forming a confidence region for an infinite dimensional parameter. To see why this is a challenge, consider a classic multivariate confidence region.  Suppose that $\theta \in \mbR^p$ and we have an estimator, $\hat \theta$, which is multivariate normal, $\hat \theta \sim \mcN_p(\theta, \Sigma)$.  The classic approach to forming a $1-\alpha$ confidence region, $G_\alpha$, is to take the following ellipse
\begin{align}
G_\alpha = \{ x \in \mbR^p : (\hat \theta - x)^\top  \Sigma^{-1} (\hat \theta - x)  \leq \xi_\alpha \}. 
\label{e:mult_conf}
\end{align}
The constant $\xi_\alpha$ is chosen so that the region achieves the proper coverage; when $\Sigma$ is known it is taken as the quantile of a $\chi^2$, while when $\Sigma$ is estimated it is taken from an $F$. For $p$ very large, at least two things happen: (1) the inversion of $\Sigma$ becomes very unstable due to small eigenvalues and (2) the constant $\xi_\alpha$ becomes very large.  In fact, a naive functional analog would require $\xi_\alpha = \infty$ and the sample covariance operator would not even be invertible.

To address this problem in the functional case, there have been at least two  main approaches. The first approach is to develop confidence bands via simulation techniques \citep{degras:2011,cao:2012,zheng:2014,cao:2014}. 
These methods work quite well, but they shift the focus from Hilbert spaces, usually $L^2[0,1]$, to Banach spaces, such as $C[0,1]$.  Given that Hilbert spaces are the foundation of the large majority of theory and methods for FDA, it is important to have a procedure which is based on Hilbert spaces.  A more minor issue is that such bands, after taking into account point-wise variability, are usually built upon using a constant threshold across all time points.  For most practical purposes, this works well, but for objects with highly complex intra-curve dependencies, it could be useful to adjust the bands. For example, in areas with high positive within curve correlation, the bands can be made narrower, and in areas with very weak correlation they should be made wider.  Finally, simulation based approaches are computationally intensive, especially if one wants to invert the procedure to find very small p-values, which is very common in genetic studies, or increase evaluation points on the domain, i.e. work on a finer grid.   

The second approach is based on functional principal component analysis, FPCA \citep{ramsay:silverman:2005,yao:muller:wang:2005JASA,goldsmith:2013}. There one uses FPCA for dimension reduction and builds multivariate confidence ellipses, which can be turned into bands using Scheff\'e's method.  As we will show, this procedure produces ellipses which have \textit{zero-coverage}.
The dimension reduction inherently clips part of the parameter, meaning that the true parameter will never lie in the region.  As a simple illustration, imagine trying to capture a two dimensional parameter with an ellipse versus a line segment. The probability of capturing the parameter with a random ellipse can usually be well controlled, but any random line segment will fail to capture the parameter with probability one.  Additionally, the bands formed from these ellipses, depend heavily on the number of FPCs used.

The paper and its contributions are organized as follows.  
In Section \ref{section:ConfRegion} we present a new geometric approach to constructing confidence regions in real separable Hilbert spaces using hyper-ellipsoids (Section \ref{s:ellipse}) and  hyper-rectangles (Section \ref{s:rectangle}).  We show how to transform confidence hyper-ellipses into confidence bands and propose a specific ellipse which gives the smallest average squared width when turned into a band (Section \ref{s:bands}).  Simulations in Section \ref{section:Simulation} suggest that this ellipse is an excellent starting point for practitioners.  We also propose a visualization technique using rectangular regions (Section \ref{s:rectangle-visual}).  
In Section \ref{section:EstConfRegion}, we detail issues involved in using estimated/empirical versions based on estimated covariances.  As a negative result, we will show that nearly all empirical regions have zero--coverage.  However, we justify using these regions in practice by introducing the concept of \textit{`ghosting'}: using regions with deficient coverage as estimates for regions with proper coverage. %
Lastly, in Sections \ref{section:Simulation} and \ref{s:dti}, we provide a 
simulation study and an application to \texttt{DTI} data in the \texttt{R} package \texttt{refund}  \citep{goldsmith:2012a,goldsmith:2012b}.

\section{Constructing Functional Confidence Regions} \label{section:ConfRegion}
Throughout this paper we consider a general functional parameter $\theta \in \mcH,$ where $\mcH$ is a real separable Hilbert space with inner product $\langle \cdot, \cdot \rangle$.  We assume that we have an estimator $\hat{\theta} \in \mcH$ which is asymptotically Gaussian in $\mcH$ in the sense that $\sqrt N (\hat \theta - \theta) \overset{d}{\to} \mcN(0,  C_\theta)$, where $N$ is the sample size and $C_\theta$ is a covariance operator that can be estimated.   
Although multivariate confidence regions \eqref{e:mult_conf} are ellipsoids, this geometric shape is a by--product of using quadratic forms.  Here, however, we take the opposite approach. We first define the desired geometric shape and then demonstrate how to adjust the region to achieve the desired confidence level. 
Recall that $1-\alpha$ (asymptotic) confidence region $G_{\hat{\theta}}$ for $\theta \in \mcH$ is a random subset of $\mcH$ which satisfies $\mbP(\theta \in G_{\hat{\theta}}) \to 1-\alpha$.  We make the following assumption to simplify arguments.
\begin{assumption} \label{a:normal}
	Assume that $\sqrt{N}(\hat \theta - \theta) \overset{d}{\to} \mcN(0, C_\theta)$, that is, is asymptotically Gaussian in $\mcH$ with mean zero and covariance operator $C_\theta$.
\end{assumption}
	Assumption \ref{a:normal} is fairly weak and satisfied by many methods for dense functional data including mean estimation \citep{degras:2011}, covariance estimation \citep{zhang:wang:2016}, eigenfunction/value estimation \citep{kokoszka:reimherr:2013b}, and function-on-scalar regression \citep{ReNi:2014}.  To achieve such a property one needs that (i) the bias of the estimate is asymptotically negligible and that (ii) the estimate is \textit{tight} so that convergence in distribution occurs in the strong topology. While these two conditions are often satisfied, there are still many FDA settings where they are not.  The bias can usually be shown to be asymptotically negligible when the number of points sampled per curve is greater than $N^{1/4}$ \citep{li:hsing:2010,cai:2011,zhang:wang:2016}.  Thus our approach will not work for sparse FDA settings. The tightness assumption is often violated when estimates stem from ill-posed inverse problems. For example, in scalar-on-function regression, typical slope estimates are not tight and not asymptotically normal in the strong topology \citep{cardot:2007}.  

The backbone of our construction, and many other FDA methods, is the Karhunen-Lo\`eve, KL, expansion which gives 
\begin{align}
\label{e:KL}
\sqrt{N} (\hat{\theta} - \theta) = \sum_{j=1}^\infty \sqrt{\lambda_j}Z_jv_j,
\end{align}
where $\{\lambda_j\}$ and $\{v_j\}$ are eigenvalues and eigenfunctions, respectively, of $C_\theta$, and $\{Z_j\}$ are uncorrelated with mean zero and unit variance.  We note that this expansion holds for any random element in $\mcH$ with a finite second moment and that the infinite sum converges in $\mcH$.
In the next subsections we discuss two types of regions which exploit this expansion.  The first is a hyper-ellipse which is, as in the multivariate case, much easier to construct.  The second is a hyper-rectangle which is not mentioned as often in the multivariate literature due to the complexity of its form. However, in Section \ref{section:Simulation} we will show that in some settings the hyper-rectangle can outperform the ellipse and is usually much more interpretable.

\subsection{Hyper-Ellipsoid Form} \label{s:ellipse}
A hyper-ellipse in any Hilbert space can be defined as follows.  One needs a center, $m \in \mcH$, axes, $e_1, e_2, \cdots $, which are an orthonormal basis for $\mcH$, and a radius for each axis, $r_1, r_2, \cdots$.  The ellipse is then given by
\[
\left\{ h \in \mathcal{H} : \sum_{j=1}^{\infty} \frac{\langle h - m, e_j \rangle^2}{r_j^2} \leq 1 \right\}.
\]
We note that this definition makes sense even when $r_j = 0$ or $\infty$.  In the former one is saying that the radius in that direction is zero or `closed', while in the latter one is saying that it is infinite or `opened'.
Since our aim is to construct a confidence region for $\theta$, we will replace the arbitrary axes above with the eigenfunctions $\{v_j\}$ and the center with $\hat \theta$, to get
\[
E_{\hat{\theta}} := \left\{ h \in \mathcal{H} : \sum_{j=1}^{\infty} \frac{\langle h - \hat{\theta}, v_j \rangle^2}{r_j^2} \leq 1 \right\}.
\]
This hyper-ellipsoid will be a $1-\alpha$ confidence region for $\theta$  if we find $\{r_j\}$ which give
\[
\mbP(\theta \in E_{\hat{\theta}}) 
= \mbP\left( \sum_{j=1}^{\infty} \frac{\langle \theta - \hat{\theta}, v_j \rangle^2}{r_j^2} \leq 1 \right)
\to 1-\alpha.
\]
Note that there are actually infinitely many options for $\{r_j\}$ but not all of them lead to `nice' regions. 
We decompose $r_j^2 = N^{-1} \xi c_j^2$, where $\{c_j\}$ are predefined weights (based on $\{\lambda_j\}$) for each direction, and $\xi$ is adjusted to achieve proper coverage. We then have
\begin{align}
\label{e:EllipsoidBound}
E_{\hat{\theta}} = \left\{ h \in \mathcal{H} : \sum_{j=1}^{\infty} \frac{\langle \sqrt{N}(\hat{\theta} - h), v_j \rangle^2}{c_j^2} \leq \xi \right\}.
\end{align}
From \eqref{e:KL} it follows that the coverage is given by
\begin{align}
\label{e:EllipsoidProbRule}
\mbP \left(\theta \in E_{\hat{\theta}} \right) 
= \mbP \left( W_\theta \leq \xi \right)
\qquad
\text{where}
\qquad 
W_{\theta} = \sum_{j=1}^{\infty} \frac{\lambda_j}{c_j^2}Z_j^2.
\end{align}
Therefore, to achieve the desired asymptotic confidence level for a given $\{c_j\}$, one can take $\xi$ to be the $1-\alpha$ quantile of a weighted sum of chi-squared random variables. 
Though the distribution of the weighted sum of chi-squares does not have a closed form expression, fast and efficient numerical approximations exist such as the {\tt imhof} function in {\tt R} \citep{Imhof:1961:CDQ}. 

In choosing $\{c_j\}$ we suggest two important considerations.  The first is that one wants $c_j \to 0$ so as to eliminate the effect of later dimensions.  In doing so, one is also producing compact regions \citep{laha:roghatgi:1979}.  Since probability measures over Hilbert spaces are necessarily tight \citep{billingsley:1995}, meaning they concentrate on compact sets, a region which is not compact is overly large.  Conversely, the faster that $c_j \to 0$, the larger the mean of  $W_\theta$, which increases all of the radii.  Therefore, it seems desirable to balance these two concerns, choosing $c_j$ which go to zero, but not overly fast.   
\begin{figure}
	\centering
	\makebox{\includegraphics[width=\textwidth]{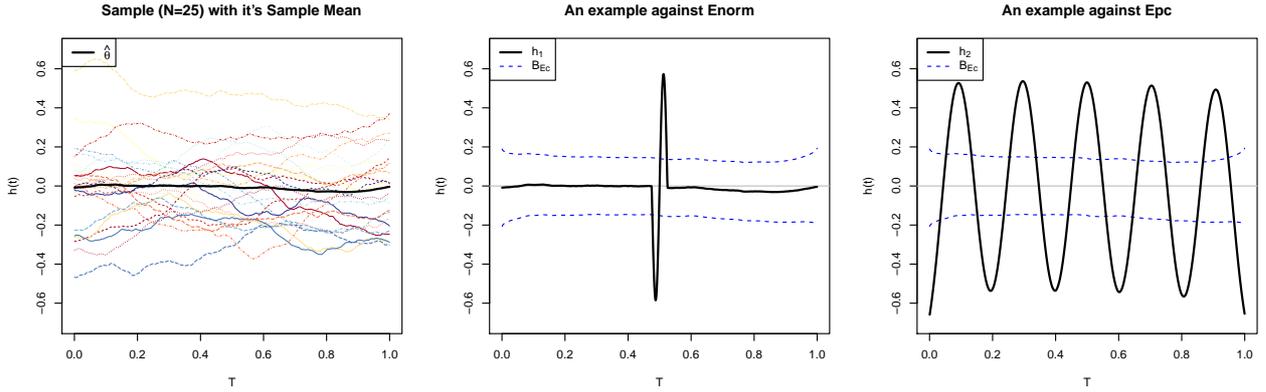}}
	\caption{\label{fig:ShortfallOfNoncompactRegion}For this illustration purpose, we take $\mcH$ = $L^2[0,1]$. The left plot shows an $iid$ mean zero Gaussian sample on $[0,1]$ along with its sample mean. Note that 95\% confidence region $E_{norm}$ contains all functions in $\mcH$ as long as they are close to the sample mean in $\mcH$ norm, in this case $L^2$ norm, like $h_1$ in the second column. However, a properly tailored confidence region $E_{c}$ yields a band form $B_{E_c}$ such that $E_{c} \subset B_{E_c}$, and $B_{E_c}$ (and therefore  $E_{c}$) exclude functions that are not bounded in the band \textit{almost everywhere}. Third column shows more extreme example of unbounded confidence regions. For the region $E_{PC(10)}^\circ$ (an \textit{opened-up} version), $h_2 := \hat{\theta} + b  v_{11}$ ($b$ being any real number) is inside the $E_{PC(10)}^\circ$ regardless of the confidence level.}
\end{figure}

Two popular hypothesis testing frameworks in FDA, the {\it norm approach} and {\it PC approach}, can be understood as two extreme cases in this framework. The norm approach to test $H_0: \theta = \theta_0$ uses $N\|\hat{\theta} - \theta_0 \|^2$ as the test statistic. If $H_0$ is true, this test statistic, asymptotically, is a weighted sum of $\chi^2_1$ random variables with weights $\{\lambda_j\}$. This corresponds to taking $c_j^2 = 1$ for all $j$. The resulting confidence region, which we denote as $E_{norm}$, is a ball in $\mcH$ and provides proper coverage for $\theta$. However, this region is not compact and therefore too large as illustrated in Figure \ref{fig:ShortfallOfNoncompactRegion}. 
The PC approach to hypothesis testing uses $\sum_{j=1}^{J} { N \langle \hat{\theta} - \theta_0, v_j \rangle^2}{\lambda_j^{-1}}$ as the test statistic, for some finite $J$. If $H_0$ is true, this test statistic follows a $\chi^2_J$ distribution; therefore, $J$ must be a finite value even when the covariance is known. There are two possible confidence regions induced by this approach.  Both regions take $c_j^2=\lambda_j$ for $j \leq J$, but for $j > J$, one could either \textit{close them off}, $c_j = 0 $, or \textit{open them up}, $c_j = \infty$. The former results in a compact confidence region, but we have $\mbP \left(\theta \in E_{\hat{\theta}} \right) = 0$, i.e. \textit{zero-coverage} even if we make the very artificial assumption that $\theta \in \Span \{v_1,\dots, v_J\}$ since the center of the region $\hat{\theta}$ sits outside $\Span \{v_1,\dots, v_J\}$ \textit{almost surely}. 
On the other hand, the \textit{opened-up} region would achieve proper coverage, but the region is not even bounded, let alone compact. 

There exists infinitely many options for proper $\{c_j\}$ and how to best choose them is an open question deserving further exploration. In preparing this work, a number of options 
were initially considered, however, we propose using the following due to 1) its ability to achieve \textit{the narrowest average squared width band} using tools from Section \ref{s:bands} 2) excellent empirical performance, and 3) its simplicity: 
\[
c_j^2 = \lambda_j^{1/2}
\qquad \text{and} \qquad E_c := \left\{ h \in \mathcal{H} : \sum_{j=1}^{\infty} \frac{\langle \sqrt{N}(\hat{\theta} - h), v_j \rangle^2}{\sqrt{\lambda_j}} \leq \xi \right\},
\] 
i.e. the square root of the corresponding eigenvalues. Although $\sum_j {\lambda_j}{c^{-2}_j} \equiv \sum_j {\lambda_j^{1/2}} < \infty$ is not always guaranteed, this holds for most processes that are smoother than Brownian motion ($\lambda_j \approx j^{-2}$), and therefore would hold in most applications. 
If the process is rough enough such that $\sum_j {\lambda_j^{1/2}} < \infty$ is not guaranteed, one may use another criteria suggested in the Appendices,  
namely $c_j^2 = \left(\sum_{i\geq j} \lambda_i \right)^{1/2}$, which guarantees both $c_j \to 0$ and $\sum_j {\lambda_j}{c^{-2}_j} < \infty$ \citep[p. 80]{rudin:1976}.

\subsection{Hyper-Rectangular Form}
\label{s:rectangle}
Our second form is a slight modification of the previous form, switching from an ellipse to a rectangle.    
In multivariate statistics a rectangular confidence region is often easier to interpret than an ellipse since it gives clear confidence intervals for each (principal component) coordinate.  However, it is often much easier to compute an ellipse since the distributions of quadratic forms are well understood. Regardless, we will show that it can still be easily computed using a nearly closed form expression, up to a function involving the standard normal quantile function. 

A hyper-rectangular region can be similarly constructed as:
\[
R_{\hat{\theta}} 
= \left\{ h \in \mathcal{H} : \frac{|  \langle h - \hat{\theta}, v_j \rangle |}{r_j} \leq 1, \forall \ j =1,2 \dots \  \right\} =  \left\{ h \in \mathcal{H} : |  \langle \sqrt{N}(h - \hat{\theta}), v_j \rangle  | \leq c_j\sqrt{\xi} , \ \forall j \right\} 
\]
using the same decomposition $r_j^2 = N^{-1} \xi c_j^2$. 
From the KL expansion \eqref{e:KL}, we want
\begin{align}
\label{e:RectangleProbRule}
\mbP \left(\theta \in R_{\hat{\theta}} \right) 
= \mbP \left( |\sqrt{\lambda_j}Z_j| \leq c_j \sqrt{\xi} , \ \forall j \right)  
\to 1-\alpha.
\end{align}
When we define $z_j := \frac{c_j}{\sqrt{\lambda_j} } \sqrt{\xi}$, the remaining problem is to find proper $\{z_j\}$, or selecting the $\{c_j\}$ and finding the proper $\xi$. One may first determine $\{c_j\}$ and find the proper $\xi$, or find the proper $\{z_j\}$ directly. Again, there exists infinitely many criteria and some examples can be found in the Appendices. Among those, we propose using the following: 
\[
z_j = \Phi_{sym}^{-1}\left[ \exp \left( {\frac{\lambda_j}{\sum_{k=1}^\infty \lambda_k} \log(1-\alpha)} \right) \right] \text{ for each } j,
\]
where $\Phi^{-1}_{sym}(\cdot)$ is defined as the inverse of $\Phi_{sym}(z):=\mbP(|Z| \leq z)$, $Z \stackrel{d}{=} \mcN(0,1)$.  We denote this rectangular region as $R_{z}$. This criterion produces a region that is close to the one that minimizes $\sup \{ \| h - \hat{\theta} \|^2 : h \in R_{\hat{\theta}} \}$, i.e. the distance between the farthest point of the region from the center,
but in a much faster way. It is simple, easy to compute, and shows an excellent empirical performance.

\subsection{Visualizing Ellipses via Bands}
\label{s:bands}
Visualizing a confidence ellipse is challenging even in the finite dimensional setting; it is very difficult once one goes beyond two or three dimensions.  In this sense, the rectangular regions are much easier to visualize since one can simply translate them into marginal intervals and examine each coordinate separately (while still achieving simultaneous coverage).    
It is therefore useful to develop visualization techniques for elliptical regions. One option is to construct bands in the form of an infinite collection of point-wise intervals over the domain of the functions.
To make our discussion more concrete, in this section only we assume that $\mcH = L^2(\mcD)$, where $\mcD$ is some compact subset of $\mbR^d$.   
For example, $d = 1$ for temporal curves and $d=2$ for spatial surfaces. 

A symmetric confidence band in $\mcH$ around $\hat{\theta}$ can be understood as:
\begin{align}
\label{e:bandform}
B_{\hat{\theta}} 
&= \left\{ h \in \mathcal{H} : |  h(x) - \hat{\theta}(x) | \leq r(x), \ \text{for } x \in \mcD \text{\textit{ almost everywhere}} \right\}.
\end{align}
The caveat ``almost everywhere" here (i.e. except on a set of Lebesgue measure zero) cannot be dropped since we are working with $L^2$ functions.  The downside of using the above band, however, is that an analytic expression for $r \in \mcH$ usually does not exist. One therefore typically resorts to simulation based methods as in \citet{degras:2011}. 
The band suggested by \citet{degras:2011} takes $c_\alpha \hat{\sigma}(t)/\sqrt{N}$ as $r(t)$ where $\hat{\sigma}(t)$ is the estimated standard deviation of $\sqrt{N}\hat{\theta}(t)$. The proper scaling factor $c_\alpha$ is then found via parametric bootstrap.  
We denote this band as $\hat{B}_{s}$, while denoting the one using the true covariance as $B_{s}$.

In traditional multivariate statistics and linear regression, ellipses can be transformed into point-wise intervals and bands using Scheff\'e's method which, at its heart, is an application of the Cauchy-Schwarz inequality.  This approach cannot be applied \textit{as is} to our ellipses because they are infinite dimensional.  However, a careful modification of Scheff\'e's method can be used to generate bands.  
We now show a $1-\alpha$ ellipsoid confidence region $E_{\hat{\theta}}$ can be transformed into the confidence band $B_{\hat{\theta}}$ such that $E_{\hat{\theta}} \subset B_{\hat{\theta}}$ based on a modification of Scheff\'e's method.
Defining 
\begin{align}
\label{e:BandZ1}
r(x) =  \sqrt{ \frac{\xi}{N}\sum_{j=1}^\infty c_j^2  v_j^2(x)}\ ,
\end{align}
then we have the following theorem.

\begin{theorem}\label{t:band}
	If Assumption \ref{a:normal} holds, $\sum c_j^2 < \infty$, and $\sum \lambda_j c_j^{-2}<\infty$, then $r(x) \in \mcH$ and $E_{\hat{\theta}} \subset B_{\hat{\theta}}$. Therefore, $\mbP(\theta \in B_{\hat{\theta}}) \geq 1-\alpha + o(1).$
\end{theorem}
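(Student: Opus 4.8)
The plan is to decompose the statement into three pieces: (i) show $r \in \mcH$, (ii) establish the set inclusion $E_{\hat{\theta}} \subset B_{\hat{\theta}}$, and (iii) deduce the coverage bound, which follows almost immediately from (ii). The heart of the argument is a pointwise Cauchy--Schwarz inequality in the spirit of Scheff\'e's method, and the main subtlety is that every pointwise manipulation must be justified \emph{almost everywhere}, because we work with $L^2$ equivalence classes rather than genuine functions.

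First I would verify (i). Since the $\{v_j\}$ are orthonormal and every summand is nonnegative, Tonelli's theorem permits interchanging the integral and the sum, giving $\|r\|^2 = \int_{\mcD} r(x)^2\, dx = \frac{\xi}{N}\sum_j c_j^2 \int_{\mcD} v_j^2(x)\, dx = \frac{\xi}{N}\sum_j c_j^2$, which is finite precisely by the hypothesis $\sum c_j^2 < \infty$. This simultaneously shows $r \in \mcH$ and that the nonnegative function $\sum_j c_j^2 v_j^2(x)$ is integrable, hence finite for almost every $x$ --- a fact I will need in the next step.

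Next, for (ii), take any $h \in E_{\hat{\theta}}$ and write $g := h - \hat{\theta}$, so that the ellipse membership condition becomes $\sum_j \langle g, v_j\rangle^2 / c_j^2 \leq \xi/N$. For the partial sums $g_n(x) := \sum_{j=1}^n \langle g, v_j\rangle v_j(x)$, Cauchy--Schwarz gives
\[
|g_n(x)| \leq \Bigl(\sum_{j=1}^n \frac{\langle g, v_j\rangle^2}{c_j^2}\Bigr)^{1/2}\Bigl(\sum_{j=1}^n c_j^2 v_j^2(x)\Bigr)^{1/2} \leq \Bigl(\frac{\xi}{N}\Bigr)^{1/2}\Bigl(\sum_j c_j^2 v_j^2(x)\Bigr)^{1/2} = r(x)
\]
at every $x$ where the second factor is finite, i.e. almost everywhere. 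The obstacle here --- and the reason I route through partial sums rather than writing the infinite expansion of $g$ directly --- is that $g_n \to g$ only in $\mcH$, so I cannot assert $g(x) = \sum_j \langle g, v_j\rangle v_j(x)$ pointwise. I would instead pass to a subsequence $g_{n_k}$ converging to $g$ almost everywhere (available from $L^2$ convergence) and let $k \to \infty$ in the bound $|g_{n_k}(x)| \leq r(x)$ to conclude $|h(x) - \hat{\theta}(x)| = |g(x)| \leq r(x)$ a.e., whence $h \in B_{\hat{\theta}}$.

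Finally, (iii) is immediate: the inclusion $E_{\hat{\theta}} \subset B_{\hat{\theta}}$ yields $\{\theta \in E_{\hat{\theta}}\} \subseteq \{\theta \in B_{\hat{\theta}}\}$, so $\mbP(\theta \in B_{\hat{\theta}}) \geq \mbP(\theta \in E_{\hat{\theta}})$. By \eqref{e:EllipsoidProbRule} the right-hand side equals $\mbP(W_\theta \leq \xi) \to 1-\alpha$ once $\xi$ is the $1-\alpha$ quantile of $W_\theta$ under Assumption \ref{a:normal}; here the remaining hypothesis $\sum_j \lambda_j c_j^{-2} < \infty$ is exactly what guarantees $W_\theta = \sum_j (\lambda_j/c_j^2) Z_j^2$ is finite almost surely (it has finite mean), so that this quantile is finite and the ellipse is genuinely a $1-\alpha$ region. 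Combining gives $\mbP(\theta \in B_{\hat{\theta}}) \geq 1-\alpha + o(1)$, completing the plan. I expect the a.e. justification of the pointwise bound in step (ii) to be the only genuinely delicate point; the remainder is bookkeeping with the two summability hypotheses.
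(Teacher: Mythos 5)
Your proposal is correct and follows essentially the same route as the paper's proof: the norm computation for $r$, the Cauchy--Schwarz (Scheff\'e-type) bound to get $E_{\hat{\theta}} \subset B_{\hat{\theta}}$, and monotonicity of probability for the coverage claim. The only difference is that you justify the almost-everywhere pointwise bound more carefully (via partial sums and an a.e.-convergent subsequence, since the basis expansion converges only in $\mcH$), a detail the paper's proof asserts without elaboration.
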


These bands also lead to a convenient metric for choosing an ``optimal" sequence $c_j$.  In particular, we choose the $c_j$ which lead to a band with the \textit{narrowest average squared width}.  This is in general a difficult metric to quantify due to $\xi$.  However, we can replace $\xi$, which is a quantile of a random variable, by its mean to obtain the following:
\begin{align*}
ASW(\{ c_j \} ) = \sum_{j=1}^\infty \frac{\lambda_j}{c_j^2} \sum_{i=1}^\infty {c_i^2}.
\end{align*}
Clearly the $\{c_j\}$ are unique only up to a constant multiple, however, it is a straightforward calculus exercise to show that one option is to take $c_j^2 = \lambda_j^{1/2}$, which is also conceptually very simple.  It is also worth noting that this choice does not change with the smoothness of the underlying parameters or the covariance of the estimator; these quantities are implicitly captured by the eigenvalues themselves and thus already built into the $c_j$ with this choice.  
In practice, the coverage of this band will be larger than $1-\alpha$, since $E_{\hat{\theta}} \subset B_{\hat{\theta}}$ and the coverage of $E_{\hat{\theta}}$ is $1-\alpha$. Our simulation studies show that this gap is non-trivial for rougher processes, but narrows substantially for smoother ones.  The band formed this way from $E_c$ will be denoted as $B_{E_c}$. 

Our suggested band \eqref{e:BandZ1} takes into account the covariance structure of the estimator via the eigenvalues (though the $c_j$) and the eigenfunctions.  Thus, our band differs from those described in \cite{degras:2011} in that we do not use a constant threshold after taking into account the point-wise variance; our band adjusts locally to the within curve dependence of the estimator.  We will illustrate this point further in Section \ref{section:Simulation} as one of our simulation scenarios will have a dependence structure which changes across the domain.  Our band will adjust to this dependence, widening in areas with low within curve dependence and narrowing when this dependence is high. 

Lastly, one practical issue arises in finding proper $\xi$ since finding the quantile of weighted sum of $\chi^2$ random variables is not straightforward.  One may try to invert the approximate CDF, like \texttt{imhof} in $\texttt{R}$.  Alternatively, one can use a gamma approximation by matching the first two moments \citep{feiveson:delaney:1968}.   Our simulations showed that for typical choices of $\alpha$, such as $0.1, 0.05,$ or $0.01$, a gamma approximation works well.

\section{Estimating Confidence Regions and Ghosting}
\label{section:EstConfRegion}
We have, until now, treated $C_\theta$ as known for ease of exposition and to explore the infinite dimensional nature of the regions. In this section we consider the fully estimated versions.  Issues arise here that do not in the multivariate setting. In particular, one typically has \textit{zero-coverage} when working with estimated regions, but we will show that these regions are still in fact useful since they are very close to regions with proper coverage. In this sense, we call them \textit{Ghost Regions} since they `ghost' the regions with proper coverage.  Here we view the empirical regions as estimators of the desired regions which have proper coverage, and then show that the distance between the two quickly converges to zero. Our purpose in doing so is to provide a theoretical justification for using the regions in practice. In Section \ref{section:Simulation} we will also validate these regions through simulations. 

We assume that we have an estimator $\hat{C}_\theta$ of $C_\theta$ which achieves root-$N$ consistency.  
Consistency of $\hat{C}_\theta$ enables us to replace $\{( v_j , \lambda_j) \}_{j=1}^{\infty}$ with the empirical versions $\{( \hat{v}_j , \hat{\lambda}_j) \}_{j=1}^{N}$\footnote{In practice we usually have less than $N$ empirical eigenfunctions due to the estimation of other parameters.}. 
When we replace $\{v_j\}_{j=1}^\infty$ with $\{\hat{v}_j\}_{j=1}^{N}$, however, we nearly always end up with a finite number of estimated eigenfunctions (with nonzero eigenvalues). 
We present asymptotic theory for the hyper-ellipsoid form although similar arguments can be applied to the hyper-rectangular form.

Define $\cH_J := \Span ( \{\hat{v}_j\}_{j=1}^{J} ) \subset \mcH$ where $J \leq N$. We construct two versions of the estimated confidence regions 
\begin{align}
\label{e:estimatedregion-open}
\hat{E}^{\circ}_{\hat{\theta}} = \left\{ h \in \cH : \sum_{j=1}^{J} \frac{\langle h - \hat{\theta}, \hat{v}_j \rangle^2}{N^{-1}c_j^2} \leq \xi \right\} \qquad \text{and}
\end{align}
\begin{align}
\begin{split}
\label{e:estimatedregion}
\hat{E}_{\hat{\theta}} 
= \left\{ h \in \cH_J : \sum_{j=1}^{J} \frac{\langle h - \hat{\theta}, \hat{v}_j \rangle^2}{N^{-1} c_j^2} \leq \xi \right\} 
= \left\{ h \in \cH: \sum_{j=1}^{\infty} \frac{\langle h - \hat{\theta}, \hat{v}_j \rangle^2}{N^{-1} c_j^2 \mathbf{1}_{j \leq J}} \leq \xi \right\},
\end{split}
\end{align}
though in our theoretical results we will let $J \to \infty $ with $N$.  The empirical eigenfunctions $\{\hat{v}_j\}_{j=1}^J$ can be extended to give a full orthonormal basis of $\cH$. Note that $\hat{E}_{\hat{\theta}}$ is `closed off' while $\hat{E}^{\circ}_{\hat{\theta}}$ is `opened up' for those dimensions not captured by the first $J$ components.
We take $\xi$ to be the $1-\alpha$ quantile of a weighted sum of $\chi^2$ random variables with weights $\{{\hat \lambda_j}{c^{-2}_j} \}_{j=1}^J$.  
Observe that $\hat{E}_{\hat{\theta}}^{\circ}$ achieves the proper coverage $\mbP ( \theta \in \hat{E}_{\hat{\theta}}^{\circ} ) \to 1-\alpha$. However, $\hat{E}_{\hat{\theta}}^{\circ}$ cannot be compact  regardless of how $\{c_j\}$ is chosen unless $\mcH$ is finite dimensional. If we quantify the distance between sets using Hausdorff distance, $\hat{E}_{\hat{\theta}}^{\circ}$ does not converge to $E_{\hat{\theta}}$ since it is unbounded. 
On the other hand, $\hat{E}_{\hat{\theta}}$ is always compact but has \textit{zero-coverage}; we  almost always have $\mbP ( \theta \in \hat{E}_{\hat{\theta}} ) = 0$ regardless of the sample size. 
Therefore, neither empirical confidence regions maintains the nice properties of the ones using a known covariance -- compactness and proper coverage -- at the same time. However, as we will show, $\hat{E}_{\hat{\theta}}$ is close to $E_{\hat{\theta}}$ in Hausdorff distance, meaning we can use $\hat{E}_{\hat{\theta}}$ as an estimate of the desired region ${E}_{\hat{\theta}}$. With this convergence result at hand, one may prefer the closed version $\hat{E}_{\hat{\theta}}$ over $\hat{E}_{\hat{\theta}}^{\circ}$ as a confidence region. Because $\hat{E}_{\hat{\theta}}$ does not have proper coverage we call it a \textit{ghost} region.

\subsection{Convergence in the Hausdorff Metric}
In this subsection we show that the Hausdorff distance, denoted $d_H$, between $\hat{E}_{\hat{\theta}}$  and ${E}_{\hat{\theta}}$ can be well controlled. In particular, we will show that this distance converges to zero faster than $N^{-1/2}$.  Since this is the rate at which ${E}_{\hat{\theta}}$ shrinks to a point, this is necessary to ensure that $\hat{E}_{\hat{\theta}}$ is actually useful as a proxy for ${E}_{\hat{\theta}}$.
We begin by introducing a fairly weak assumption on the distribution of $\hat{C}_\theta$.  Recall that $C_\theta$ is a Hilbert-Schmidt operator (all covariance operators are) in the sense that 
$
\|C_{\theta}\|^2_{\mcS} := \sum_{j=1}^\infty \| C_{\theta}(e_j) \|^2_\mcH < \infty
$
where $\{e_j\}$ is any orthonormal basis of $\mcH$. We denote the vector space of Hilbert-Schmidt operators by $\mcS$, which is also a real separable Hilbert space with inner product
$
\langle \Psi, \Phi \rangle_{\mcS} := \sum_{j=1}^\infty \langle \Psi(e_j), \Phi(e_j)  \rangle_{\mcH}.
$
A larger space, $\mcL$, consists of all bounded linear operators with norm
$
\| \Psi\|_{\mcL} = \sup_{h \in \mcH}  \| \Psi (h)\|  / \|h\|,
$
which is strictly smaller than the $\mcS$ norm, implying $\mcS \subset \mcL$.  
We now assume that we have a consistent estimate of $C_\theta$.
\begin{assumption} \label{assumption:1}
	Assume that we have an estimator $\hat C_\theta$ of $C_\theta$ which is root-$N$ consistent in the sense that $N  \mbE \| \hat C_\theta - C_\theta\|_{\mcS}^2 = O(1)$.
\end{assumption}
The Hausdorff distance between two subsets $S_1$ and $S_2$ of $\cH$ is defined as
\[
d_H(S_1,S_2) = \max\{\rho(S_1,S_2), \rho(S_2,S_1) \}, \qquad \text{where} \qquad \rho(S_1,S_2) = \sup_{x \in S_1} \inf_{y \in S_2} \|x - y \|_\cH.
\]
We say two regions $S_1$ and $S_2$ converge to each other if $d_H(S_1,S_2)$ converges to $0$. Therefore, to achieve convergence of $\hat{E}_{\hat{\theta}}$ to $E_{\hat{\theta}}$ in probability, we need $d_H(\hat{E}_{\hat{\theta}},E_{\hat{\theta}}) \xrightarrow{P} 0 $ as $N \to \infty$.
To accomplish this, we separate the results for $\rho(\hat{E}_{\hat{\theta}},E_{\hat{\theta}})$ and $\rho({E}_{\hat{\theta}},\hat E_{\hat{\theta}})$.  Since $\hat{E}_{\hat{\theta}}$ is the ``smaller" set, the former is primarily controlled by the distance between the empirical and population level eigenfunctions. The latter is additionally influenced by how large the remaining dimension of $E_{\hat{\theta}}$ is. We define $\{\alpha_j\}$ as
\[
\alpha_1 := \lambda_1 - \lambda_2 \quad \text{and} \quad \alpha_j := \min\{ \lambda_j - \lambda_{j+1}, \lambda_{j-1} - \lambda_{j} \} \text{ for } j = 2, \dots.
\]
Our primary convergence results are given in the following two theorems.
\begin{theorem} 
	\label{thm:Convergence1}
	If Assumptions \ref{a:normal} and \ref{assumption:1} hold and $c_1 \geq c_2 \geq \dots$ then with probability one
	\begin{align}
	\label{e:order1}
	\rho(\hat{E}_{\hat{\theta}},E_{\hat{\theta}}) \leq \left[ \sum_{j=1}^{J}  \frac{8 \xi  c_1^2 \|\hat C_\theta - C_\theta\|_\mcL^2}{N \alpha_j^{2}} \right]^{\frac{1}{2}}.
	\end{align}
\end{theorem}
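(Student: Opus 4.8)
The quantity $\rho(\hat{E}_{\hat{\theta}},E_{\hat{\theta}}) = \sup_{x \in \hat{E}_{\hat{\theta}}} \inf_{y \in E_{\hat{\theta}}}\|x-y\|$ measures how far the empirical region can stray from the population-covariance region, so the plan is to fix an arbitrary $x \in \hat{E}_{\hat{\theta}}$, exhibit a single feasible competitor $y \in E_{\hat{\theta}}$ that is provably close, and bound $\|x-y\|$ by a quantity independent of $x$; taking the supremum then yields the stated inequality. Since $x \in \cH_J$, I would first record its coordinates $a_j := \langle \sqrt N(\hat{\theta}-x),\hat v_j\rangle$ for $j \le J$, for which membership in $\hat{E}_{\hat{\theta}}$ is exactly $\sum_{j=1}^J a_j^2/c_j^2 \le \xi$. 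The natural competitor is obtained by transporting $x$ along the same coordinates but through the population eigenbasis, namely $y := \hat{\theta} - N^{-1/2}\sum_{j=1}^J a_j v_j$, i.e. we replace each empirical direction $\hat v_j$ by its population counterpart $v_j$.

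First I would verify feasibility. By construction $\langle \sqrt N(\hat{\theta}-y),v_k\rangle = a_k$ for $k\le J$ and $0$ otherwise, so the defining sum for $E_{\hat{\theta}}$ collapses to $\sum_{k=1}^\infty \langle\sqrt N(\hat\theta - y),v_k\rangle^2/c_k^2 = \sum_{j=1}^J a_j^2/c_j^2 \le \xi$, giving $y \in E_{\hat{\theta}}$. Next I would compute the displacement. Writing $x = \hat{\theta} - N^{-1/2}\sum_{j\le J}a_j \hat v_j$ (valid once the component of $\hat{\theta}$ outside $\cH_J$ is absent), the two centers cancel and I obtain the clean expression $x-y = -N^{-1/2}\sum_{j=1}^J a_j(\hat v_j - v_j)$, in which the only surviving error is the rotation of each eigenfunction.

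The bound then follows from two standard estimates. The triangle inequality followed by the Cauchy--Schwarz inequality for sequences gives $\|x-y\|^2 \le N^{-1}\big(\sum_{j\le J}a_j^2\big)\big(\sum_{j\le J}\|\hat v_j - v_j\|^2\big)$. The monotonicity hypothesis $c_1 \ge c_2 \ge \cdots$ is precisely what controls the first factor, since $\sum_{j\le J}a_j^2 = \sum_{j\le J}(a_j^2/c_j^2)c_j^2 \le c_1^2 \sum_{j\le J}a_j^2/c_j^2 \le \xi c_1^2$; this is where the $c_1^2$ in \eqref{e:order1} originates. For the second factor I would invoke the classical eigenfunction perturbation bound $\|\hat v_j - v_j\| \le 2\sqrt 2\,\|\hat C_\theta - C_\theta\|_\mcL/\alpha_j$ (valid under the sign convention $\langle \hat v_j,v_j\rangle \ge 0$, with $\alpha_j$ the spectral gap defined above), so that $\sum_{j\le J}\|\hat v_j - v_j\|^2 \le 8\|\hat C_\theta - C_\theta\|_\mcL^2 \sum_{j\le J}\alpha_j^{-2}$. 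Multiplying the two factors and dividing by $N$ reproduces the right-hand side of \eqref{e:order1} exactly; as the bound is uniform in $x$, the supremum obeys the same inequality, and since every step is a deterministic consequence of the given realizations of $\hat\theta$ and $\hat C_\theta$, the conclusion holds with probability one.

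The main obstacle is the step where I claimed $x = \hat\theta - N^{-1/2}\sum_{j\le J}a_j\hat v_j$ with no leftover term: in general $x \in \cH_J$ only forces $x = P_J\hat\theta - N^{-1/2}\sum_{j\le J}a_j\hat v_j$, where $P_J$ is the orthogonal projection onto $\cH_J$, so a residual $(I-P_J)\hat\theta$ appears in $x-y$ and is \emph{not} reflected in \eqref{e:order1}. The cleanest resolution is that in our setting $\hat\theta \in \cH_J$ (the estimator lies in the span of the retained empirical eigenfunctions, automatic when $J$ is the empirical rank and $\hat\theta$ lives in the observed data span), making $(I-P_J)\hat\theta = 0$; without this, one must separately argue the tail is negligible, and indeed $\hat{E}_{\hat{\theta}}$ could not track $E_{\hat{\theta}}$ at all if $\hat\theta$ carried non-vanishing mass orthogonal to $\cH_J$. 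A secondary technical point is the perturbation bound itself, which requires distinct leading eigenvalues (so that the gaps $\alpha_j$ are positive) and a consistent orientation of eigenfunctions; this is standard but must be stated with care. Finally, although this theorem is purely the deterministic inequality, Assumption \ref{assumption:1} is what later forces the right-hand side to vanish faster than $N^{-1/2}$, since it bounds $N\,\mbE\|\hat C_\theta - C_\theta\|_\mcS^2 = O(1)$ and $\|\cdot\|_\mcL \le \|\cdot\|_\mcS$.
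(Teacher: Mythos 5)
Your proposal is correct and follows essentially the same route as the paper's proof: the same competitor $y = \hat{\theta} + \sum_{j=1}^{J}\langle x-\hat{\theta},\hat v_j\rangle v_j$, the same feasibility check, the same Cauchy--Schwarz step with monotonicity of the $c_j$ supplying the $c_1^2$ factor, and the same eigenfunction perturbation bound $\|\hat v_j - v_j\|\le 2\sqrt{2}\,\|\hat C_\theta - C_\theta\|_\mcL/\alpha_j$. Your closing remark about the residual $(I-P_J)\hat\theta$ correctly identifies the implicit convention (that $x-\hat\theta\in\cH_J$, as in the second displayed form of $\hat E_{\hat\theta}$) under which the paper's own computation of $x-y$ is valid.
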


\begin{theorem} 
	\label{thm:Convergence2}
	If Assumptions \ref{a:normal} and \ref{assumption:1} hold and $c_1 \geq c_2 \geq \dots$ then with probability one
	\begin{align}
	\rho(E_{\hat{\theta}}, \hat{E}_{\hat{\theta}})  
	\leq  \left[c_J^2  N^{-1} \xi \right]^{\frac{1}{2}}  + \left[ \sum_{j=1}^{J}  \frac{8 \xi  c_1^2 \|\hat C_\theta - C_\theta\|_\mcL^2}{N \alpha_j^{2}} \right]^{\frac{1}{2}}.
	\label{e:order2}
	\end{align}
\end{theorem}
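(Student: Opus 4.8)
The plan is to bound $\rho(E_{\hat{\theta}},\hat{E}_{\hat{\theta}})$ by inserting an intermediate region and applying the triangle inequality for the directed semi-distance, $\rho(A,C)\le\rho(A,B)+\rho(B,C)$. The right intermediate object is the ``closed-off'' population ellipsoid that uses the \emph{true} eigenfunctions but keeps the \emph{full} estimator as its center,
\[
E^{\flat} := \left\{ \hat{\theta} + \sum_{j=1}^{J} a_j v_j : \sum_{j=1}^J \frac{N a_j^2}{c_j^2}\le \xi\right\},
\]
which is the analogue of $\hat{E}_{\hat{\theta}}$ with $\hat{v}_j$ replaced by $v_j$; one checks at once that $E^{\flat}\subset E_{\hat{\theta}}$. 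The triangle inequality then gives $\rho(E_{\hat{\theta}},\hat{E}_{\hat{\theta}})\le\rho(E_{\hat{\theta}},E^{\flat})+\rho(E^{\flat},\hat{E}_{\hat{\theta}})$, and I will show these two summands produce respectively the first and second terms of \eqref{e:order2}.

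For the first term, I take any $x\in E_{\hat{\theta}}$ and let $y=\hat{\theta}+\sum_{j=1}^J\langle x-\hat{\theta},v_j\rangle v_j$ be its truncation to the first $J$ true directions. Because the first $J$ coordinates of $y-\hat{\theta}$ and $x-\hat{\theta}$ coincide, the defining constraint of $E^{\flat}$ is a partial sum of the one satisfied by $x$, so $y\in E^{\flat}$. The residual is exactly the tail, $\|x-y\|^2=\sum_{j>J}\langle x-\hat{\theta},v_j\rangle^2$. Writing $u_j=N c_j^{-2}\langle x-\hat{\theta},v_j\rangle^2$ so that $\sum_j u_j\le\xi$ and using the monotonicity $c_1\ge c_2\ge\cdots$, the tail equals $\sum_{j>J}(c_j^2/N)u_j\le (c_J^2/N)\sum_{j>J}u_j\le c_J^2\xi/N$. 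Hence $\rho(E_{\hat{\theta}},E^{\flat})\le[c_J^2 N^{-1}\xi]^{1/2}$, the first summand. This is precisely the step where the closed-off region discards the tail dimensions of $E_{\hat{\theta}}$; keeping the shared center $\hat{\theta}$ (rather than projecting it) is what makes the residual reduce to the \emph{controllable} tail of $x-\hat{\theta}$ instead of the uncontrolled tail of $\hat{\theta}$ itself.

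For the second term, $E^{\flat}$ and $\hat{E}_{\hat{\theta}}$ are the same finite-dimensional ellipsoid expressed in the two orthonormal systems $\{v_j\}$ and $\{\hat{v}_j\}$, so this is exactly the eigenfunction-perturbation estimate already used in Theorem \ref{thm:Convergence1}. Concretely, for $w=\hat{\theta}+\sum_{j\le J}b_j v_j\in E^{\flat}$ I pair it with $\hat{w}=\hat{\theta}+\sum_{j\le J}b_j\hat{v}_j\in\hat{E}_{\hat{\theta}}$ (identical coefficients, so the constraint transfers verbatim), giving $\|w-\hat{w}\|\le\sum_{j\le J}|b_j|\,\|v_j-\hat{v}_j\|$. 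Applying Cauchy--Schwarz, then the bound $\sum_{j\le J}b_j^2\le c_1^2\xi/N$ coming from the constraint together with the Davis--Kahan/Bosq inequality $\|v_j-\hat{v}_j\|\le 2\sqrt{2}\,\alpha_j^{-1}\|\hat{C}_\theta-C_\theta\|_{\mcL}$, yields $\|w-\hat{w}\|^2\le\sum_{j=1}^J 8\xi c_1^2 N^{-1}\alpha_j^{-2}\|\hat{C}_\theta-C_\theta\|_{\mcL}^2$, which is exactly the second summand. Adding the two bounds completes the proof of \eqref{e:order2}.

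I expect the main obstacle to be bookkeeping in the second term rather than any conceptual difficulty: one must first fix the signs of the empirical eigenfunctions so that $\langle \hat{v}_j,v_j\rangle\ge0$ (otherwise $\|v_j-\hat{v}_j\|$ is not controlled by the eigengap), and one must restrict to the probability-one event on which the relevant population eigenvalues are simple, so that the gaps $\alpha_j$ are strictly positive and the perturbation inequality applies. The remaining pieces---verifying $y\in E^{\flat}$, the containment $E^{\flat}\subset E_{\hat{\theta}}$, and the Cauchy--Schwarz and constraint estimates---are routine once the intermediate region is chosen with the correct, \emph{unprojected}, center $\hat{\theta}$.
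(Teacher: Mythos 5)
Your proposal is correct and follows essentially the same route as the paper: the paper fixes $y\in E_{\hat{\theta}}$, forms the truncation $y_J=\hat{\theta}+\sum_{j\le J}\langle y-\hat{\theta},v_j\rangle v_j$ and the coefficient-transplanted point $x=\hat{\theta}+\sum_{j\le J}\langle y-\hat{\theta},v_j\rangle \hat{v}_j\in\hat{E}_{\hat{\theta}}$, and applies the pointwise triangle inequality $\|y-x\|\le\|y-y_J\|+\|y_J-x\|$, which yields exactly your two summands ($\|y-y_J\|^2\le c_J^2N^{-1}\xi$ via monotonicity of the $c_j$, and the Theorem \ref{thm:Convergence1} perturbation bound for the second term). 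Your packaging of the intermediate points into an explicit region $E^{\flat}$ and use of the semi-distance triangle inequality is only a cosmetic reorganization of the same argument, and your side remarks on sign-alignment of $\hat{v}_j$ are consistent with the convention implicit in Lemma \ref{lemma:DiffinNormExpansion}.
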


With Theorems \ref{thm:Convergence1} and \ref{thm:Convergence2} in hand, we can characterize the overall convergence rate for $d_H(\hat{E}_{\hat{\theta}},E_{\hat{\theta}}) $, but we first need more explicit assumptions on the rates for the eigenvalues, $\lambda_j$, and weights, $c_j^2$.  
\begin{assumption}
	\label{a:rate1}
	Assume that there exist constants $K > 1 $, $\delta > 1$, and $\gamma > 0$ such that  
	$$\frac{1}{Kj^{\delta}}  \leq \lambda_j \leq \frac{K}{  j^{\delta}},
	\qquad
	\frac{1}{Kj^{\delta+1}}  \leq \lambda_j - \lambda_{j+1} \leq \frac{K}{ j^{\delta+1}},
	\quad  \text{ and } \quad
	\frac{1}{K j^{ 2 \gamma}} \leq c_j^2 \leq \frac{K}{j^{ 2 \gamma}},
	$$ 
	for all $j = 1, \dots$, where we have $0 < 2 \gamma < \delta -1$.
\end{assumption}
The first two assumptions are quite common in FDA.  One needs to control the rate at which the eigenvalues go to zero as well as the spread of the eigenvalues which influences how well one can estimate the corresponding eigenfunctions, though this can likely be slightly relaxed \citep{reimherr:2015}.  The rate at which $c_j^2$ decreases to zero also needs to be well controlled, and, in particular, it cannot go to zero much faster than $\lambda_j$.

\begin{theorem}
	\label{thm:ConvergenceOverall}
	Assume that Assumptions \ref{a:normal},  \ref{assumption:1} and \ref{a:rate1} hold, then
	\begin{enumerate}
		\item The $J$ which balances \eqref{e:order1} and \eqref{e:order2} is $J = N^{\frac{1}{2 \delta + 3 + 2 \gamma}}$.
        \item The overall convergence rate is then
		$\mbE [d_H(\hat{E}_{\hat{\theta}},E_{\hat{\theta}})^2 ] \leq   
		O \left( N^{-\left( 2 - \frac{2\delta + 3}{2 \delta + 3 + 2\gamma} \right)} \right)$.
	\end{enumerate}
\end{theorem}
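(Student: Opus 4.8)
The plan is to combine the two one-sided bounds from Theorems \ref{thm:Convergence1} and \ref{thm:Convergence2} into a single bound on the symmetric Hausdorff distance, take expectations using Assumption \ref{assumption:1}, evaluate the resulting sums under the polynomial rates of Assumption \ref{a:rate1}, and finally optimize over the truncation level $J$. Throughout, write $T_1 = \sum_{j=1}^{J} 8\xi c_1^2 \|\hat C_\theta - C_\theta\|_{\mcL}^2 /(N\alpha_j^2)$ for the common ``eigenfunction-estimation'' term appearing in both \eqref{e:order1} and \eqref{e:order2}, and $T_2 = c_J^2 N^{-1}\xi$ for the ``truncation'' term appearing only in \eqref{e:order2}.

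First I would reduce the two-sided distance to these pieces. Theorems \ref{thm:Convergence1} and \ref{thm:Convergence2} give $\rho(\hat{E}_{\hat{\theta}}, E_{\hat{\theta}}) \leq \sqrt{T_1}$ and $\rho(E_{\hat{\theta}}, \hat{E}_{\hat{\theta}}) \leq \sqrt{T_2} + \sqrt{T_1}$ with probability one, so that $d_H(\hat{E}_{\hat{\theta}}, E_{\hat{\theta}}) = \max\{\rho(\hat{E}_{\hat{\theta}}, E_{\hat{\theta}}),\, \rho(E_{\hat{\theta}}, \hat{E}_{\hat{\theta}})\} \leq \sqrt{T_1} + \sqrt{T_2}$. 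By the arithmetic–geometric mean inequality this yields $d_H^2 \leq 2(T_1 + T_2)$, reducing the problem to controlling $\mbE[T_1]$ and $T_2$ separately.

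Next I would take expectations. The only random ingredient in $T_1$ is $\|\hat C_\theta - C_\theta\|_{\mcL}^2$ (the spectral gaps $\alpha_j$ and the weight $c_1$ being deterministic, and $\xi$ bounded, see below); since $\|\cdot\|_{\mcL} \leq \|\cdot\|_{\mcS}$, Assumption \ref{assumption:1} gives $\mbE\|\hat C_\theta - C_\theta\|_{\mcL}^2 \leq \mbE\|\hat C_\theta - C_\theta\|_{\mcS}^2 = O(N^{-1})$. Under Assumption \ref{a:rate1} one has $\alpha_j \asymp j^{-(\delta+1)}$, hence $\sum_{j=1}^{J} \alpha_j^{-2} \asymp \sum_{j=1}^{J} j^{2\delta+2} \asymp J^{2\delta+3}$, while $c_1^2 \asymp 1$ and $c_J^2 \asymp J^{-2\gamma}$. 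This gives $\mbE[T_1] = O(J^{2\delta+3}/N^2)$ and $T_2 = O(J^{-2\gamma}/N)$, so that $\mbE[d_H^2] = O\!\left(J^{2\delta+3}/N^2 + J^{-2\gamma}/N\right)$. Balancing the two terms via $J^{2\delta+3}/N^2 \asymp J^{-2\gamma}/N$ forces $J^{2\delta+3+2\gamma} \asymp N$, i.e. $J = N^{1/(2\delta+3+2\gamma)}$, which is the first assertion; substituting this $J$ back (with $D := 2\delta+3+2\gamma$) gives each term the exponent $-(2\delta+3+4\gamma)/D = -\bigl(2 - (2\delta+3)/D\bigr)$, matching the claimed rate.

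I expect the main obstacle to be the control of $\xi$, which I treated as $O(1)$ above. Strictly speaking $\xi$ is the $1-\alpha$ quantile of a weighted $\chi^2$ with the estimated weights $\{\hat\lambda_j c_j^{-2}\}_{j\le J}$, so it is random and its defining sum runs up to the growing index $J$. The key point to verify is that $\sum_{j} \lambda_j c_j^{-2} \asymp \sum_{j} j^{-(\delta-2\gamma)} < \infty$, which holds precisely because $2\gamma < \delta - 1$; consequently the weights are summable uniformly in $J$, the quantile $\xi$ remains bounded (indeed converges to the population quantile obtained with the true weights), and it may be absorbed into the $O(\cdot)$ constants before the expectation is taken. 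The remaining steps — the summation estimate $\sum_{j=1}^{J} j^{2\delta+2} \asymp J^{2\delta+3}$ and the final exponent algebra — are routine.
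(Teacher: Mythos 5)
Your proposal is correct and follows essentially the same route as the paper's own proof: bound $d_H^2$ by a constant times the sum of the two terms via $(a+b)^2 \le 2a^2+2b^2$, use Assumption \ref{assumption:1} together with $\|\cdot\|_{\mcL}\le\|\cdot\|_{\mcS}$ and the Riemann-sum estimate $\sum_{j\le J} j^{2\delta+2}\asymp J^{2\delta+3}$, then balance $J^{2\delta+3}N^{-2}$ against $J^{-2\gamma}N^{-1}$. Your explicit remark that $\xi$ stays bounded because $\sum_j\lambda_j c_j^{-2}\asymp\sum_j j^{-(\delta-2\gamma)}<\infty$ under $2\gamma<\delta-1$ is a point the paper leaves implicit, and is a worthwhile addition.
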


Theorem \ref{thm:ConvergenceOverall} shows that the squared distance between the ghost region $\hat{E}_{\hat{\theta}}$ and the desired region $E_{\hat{\theta}}$ goes to zero faster than $N^{-1}$, which is the rate at which $E_{\hat{\theta}}$ shrinks to a point.  This suggests that $\hat{E}_{\hat{\theta}}$ is a viable proxy for ${E}_{\hat{\theta}}$, even though it has zero coverage. 

Interestingly, the rate is better the faster that $c_j$ tends to zero, i.e. for larger values of $\gamma$.  At first glance, this may suggest that one should actively try and find $c_j$ which tend to zero as fast as possible. However, by changing the $c_j$ one is changing the confidence region into a potentially less desirable one.  
In particular, as we will soon see the choice of $c_j^2 = \lambda_j^{1/2}$ leads to a suboptimal convergence rate in Theorem \ref{thm:ConvergenceOverall}, but, in some sense, leads to an optimal confidence band and excellent empirical performance.  Thus, this may be one of the few instances in statistics where it is not necessarily desirable to have the ``fastest" rate of convergence. 

We finish this section by stating a Corollary for when $c_j^2  = \hat \lambda_j^{1/2}$.  In this case, we also take into account that the $c_j$ are estimated from the data.  Note that in Theorem \ref{thm:ConvergenceOverall} it is assumed that the $c_j$ are not random, while in Theorems \ref{thm:Convergence1} and \ref{thm:Convergence2} the $c_j$ can be random or deterministic as long as they are nonincreasing.
\begin{corol}
	\label{corol:EchXConvRate1} Let $c_j^2 = \hat \lambda_j^{1/2}.$  Then under Assumptions \ref{assumption:1} and \ref{a:rate1},   
	we have 
	$$d(\hat E_{\hat \theta}, E_{\hat \theta})^2 =   O_p\left(N^{ - \frac{6\delta + 6}{5 \delta + 6}} \right).$$
\end{corol}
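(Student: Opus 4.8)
The plan is to reduce the corollary to Theorems \ref{thm:Convergence1} and \ref{thm:Convergence2}, which already permit the weights $c_j$ to be random as long as they are nonincreasing, and then to re-run the rate accounting of Theorem \ref{thm:ConvergenceOverall} while tracking the extra randomness introduced by $c_j^2 = \hat\lambda_j^{1/2}$. First I would observe that $\hat\lambda_1 \ge \hat\lambda_2 \ge \cdots$, so $c_j^2 = \hat\lambda_j^{1/2}$ is nonincreasing and the bounds \eqref{e:order1} and \eqref{e:order2} apply verbatim with these data-dependent weights. The target rate should then be the one obtained by substituting the effective exponent $2\gamma = \delta/2$ (since $c_j^2 = \hat\lambda_j^{1/2} \asymp j^{-\delta/2}$) into Theorem \ref{thm:ConvergenceOverall}: indeed $J = N^{2/(5\delta+6)}$ and $2 - (2\delta+3)/(2\delta+3+\delta/2) = (6\delta+6)/(5\delta+6)$ reproduce the claimed exponent. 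Note that $2\gamma = \delta/2$ together with the admissibility condition $0 < 2\gamma < \delta - 1$ forces $\delta > 2$, which will be used to guarantee summability below.

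The substantive work is to control the random scalars $c_1^2$, $c_J^2$, $\xi$, and $\|\hat C_\theta - C_\theta\|_{\mcL}$ appearing in \eqref{e:order1}--\eqref{e:order2}. By Assumption \ref{assumption:1}, $\|\hat C_\theta - C_\theta\|_{\mcL}^2 \le \|\hat C_\theta - C_\theta\|_{\mcS}^2 = O_p(N^{-1})$. Weyl's inequality gives $|\hat\lambda_j - \lambda_j| \le \|\hat C_\theta - C_\theta\|_{\mcL}$ for every $j$, so $c_1^2 = \hat\lambda_1^{1/2} = O_p(1)$, and the weighted-chi-square weights $\hat\lambda_j c_j^{-2} = \hat\lambda_j^{1/2}$ satisfy $\sum_{j \le J} \hat\lambda_j^{1/2} \le \sum_j \lambda_j^{1/2} + o_p(1) = O_p(1)$ (finite precisely because $\delta > 2$). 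A Markov bound on the quantile of this weighted chi-square, $\alpha \le \mbE[W]/\xi$ with $\mbE[W] = \sum_{j \le J}\hat\lambda_j^{1/2}$, then yields $\xi \le \alpha^{-1}\sum_{j \le J}\hat\lambda_j^{1/2} = O_p(1)$. For the leading term of \eqref{e:order2} I need $c_J^2 = O_p(J^{-\delta/2})$; since $\lambda_J \asymp J^{-\delta} \asymp N^{-2\delta/(5\delta+6)}$ dominates $\|\hat C_\theta - C_\theta\|_{\mcL} = O_p(N^{-1/2})$ at the balancing scale (because $2\delta/(5\delta+6) < 1/2$ for all admissible $\delta$), Weyl's inequality gives $\hat\lambda_J \asymp \lambda_J$ and hence $c_J^2 = \hat\lambda_J^{1/2} = O_p(J^{-\delta/2})$.

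With these estimates in hand the accounting is mechanical. The sum in \eqref{e:order1} is governed, via the eigengap rate in Assumption \ref{a:rate1}, by $\sum_{j=1}^{J} \alpha_j^{-2} \asymp \sum_{j=1}^{J} j^{2\delta+2} \asymp J^{2\delta+3}$, so $\rho(\hat E_{\hat\theta}, E_{\hat\theta})^2 = O_p(J^{2\delta+3} N^{-2})$, while the first term of \eqref{e:order2} contributes $c_J^2 N^{-1}\xi = O_p(J^{-\delta/2} N^{-1})$. Taking the maximum gives $d_H(\hat E_{\hat\theta}, E_{\hat\theta})^2 = O_p(J^{-\delta/2} N^{-1} + J^{2\delta+3} N^{-2})$, and balancing the two terms returns $J = N^{2/(5\delta+6)}$ and the stated rate $N^{-(6\delta+6)/(5\delta+6)}$.

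The main obstacle is the randomness of $c_J^2 = \hat\lambda_J^{1/2}$ for $J$ growing with $N$: one must verify that the eigenvalue perturbation $\|\hat C_\theta - C_\theta\|_{\mcL}$ stays negligible relative to the shrinking true eigenvalue $\lambda_J$ exactly at the balancing scale, so that the deterministic rate $\lambda_J^{1/2} \asymp J^{-\delta/2}$ survives estimation. Everything else --- bounding $c_1^2$, $\xi$, and the operator-norm error --- reduces to root-$N$ consistency and Weyl's inequality, so the novelty relative to Theorem \ref{thm:ConvergenceOverall} lies entirely in replacing the deterministic weight rates by their estimated counterparts.
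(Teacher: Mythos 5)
Your stochastic bookkeeping is sound --- the bounds on $c_1^2$, $\xi$, and $\|\hat C_\theta - C_\theta\|_{\mcL}$, the verification that $\hat\lambda_J \asymp \lambda_J$ at the balancing scale, and the final rate arithmetic all check out, and you are right that Theorems \ref{thm:Convergence1} and \ref{thm:Convergence2} tolerate random nonincreasing weights. The gap is in what those two theorems compare once you feed them $c_j^2 = \hat\lambda_j^{1/2}$. Their proofs map a point $x$ of one region to a point $y$ of the other by swapping $\hat v_j$ for $v_j$ while keeping the \emph{same} $c_j$ and the same $\xi$; so with random weights they control the distance from $\hat E_{\hat\theta}$ to a region built from the true eigenfunctions but still carrying the data-dependent weights $\hat\lambda_j^{1/2}$. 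That is not the target $E_{\hat\theta}$ of the corollary, which is the desired proper-coverage region with deterministic weights $\lambda_j^{1/2}$. As written, your argument establishes convergence to the wrong set, and nothing in it quantifies the discrepancy between the two weight sequences as a distance between regions.

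The paper closes exactly this hole by introducing an intermediate region $\tilde E_{\hat\theta}$ with the estimated eigenfunctions but deterministic weights $\tilde c_j^2 = \lambda_j^{1/2}$: Theorem \ref{thm:ConvergenceOverall} applied with $2\gamma = \delta/2$ gives $d_H(\tilde E_{\hat\theta}, E_{\hat\theta})^2 = O_P\bigl(N^{-(6\delta+6)/(5\delta+6)}\bigr)$, and a separate dilation argument --- shrinking each $x \in \hat E_{\hat\theta}$ toward the center by the factor $\bigl(\max_{j \le J} c_j^2/\tilde c_j^2\bigr)^{-1/2}$ --- bounds $\rho(\hat E_{\hat\theta}, \tilde E_{\hat\theta})^2$ by $\xi N^{-1} c_1^2 c_J^{-4}\max_{j \le J} (c_j^2 - \tilde c_j^2)^2 = O_P(N^{-2}\lambda_J^{-1}) = O_P\bigl(N^{-(8\delta+12)/(5\delta+6)}\bigr)$, which is of strictly lower order than the main rate. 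You would need to add an analogous step comparing the two weight sequences with the eigenfunctions held fixed to make your route complete; this is not automatic, since it is the one place where the randomness of the $c_j$ acts multiplicatively on the geometry of the region rather than entering as a scalar prefactor that can be absorbed into an $O_p(1)$ term.
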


\section{Simulation}
\label{section:Simulation}

In this section, we present a simulation study to evaluate and illustrate the proposed confidence regions and bands. Throughout this section, we only consider dense FDA designs. Section \ref{subsection:HT} first compares different regions for hypothesis testing. Note that comparing these regions presents a nontrivial challenge as we cannot just choose the ``smallest" one as we are working in infinite dimensional Hilbert spaces.  We therefore turn to using the regions for hypothesis testing, evaluating each's ability to detect different types of changes from some prespecified patterns.
In Section \ref{subsection:ComparisonOfBands}, we visually compare bands and examine their local coverages. Lastly, in Section \ref{subsection:SimulationOnData}, we consider more complicated mean and covariance structures borrowed from the DTI data in Section \ref{s:dti} and examine the effects of smoothing.

\subsection{Hypothesis Testing}
\label{subsection:HT}

We consider the hypothesis testing $H_0 : \theta =  \theta_0$ vs. $H_1 : \theta \neq \theta_0 $. For a given confidence region $G_{\hat{\theta}}$, the natural testing rule is to reject $H_0$ if $\theta_0 \notin G_{\hat{\theta}}$. 
For ellipses and rectangles, however, we compare $\theta_0$ only in the directions included in the construction of the confidence regions to alleviate the ghosting issue and mimic how the methods would be used in practice. 
We calculate p-values (detailed in the Appendices) and compare them to $\alpha$. For bands like $B_s$ and $B_{E_c}$, $H_0$ will be rejected if $\theta_0$ sits outside the bands at least one evaluation point over the domain.

In this section and in Section \ref{subsection:ComparisonOfBands}, we take $\mcH = L^2[0,1]$ and consider an $iid$ sample $\{X_i(t)\}_{i=1}^N$, $t\in[0,1]$ from a Gaussian process $\mcN(\theta, C_\theta)$. 
To estimate $\theta$ and $C_\theta$ (when unknown), we use the standard estimates \citep{hkbook}  $\hat{\theta}(t) = N^{-1}\sum_{i=1}^{N}X_i(t)$ and 
$\hat{C}_\theta(t,s) = (N-1)^{-1} \sum_{i=1}^N  ( X_i(t) - \hat{\theta}(t) ) ( X_i(s) - \hat{\theta}(s) )$. 
To emulate functions on the continuous domain $[0,1]$, functions are evaluated at 100 equally spaced points over $[0,1]$.

\subsubsection{Verifying Type I Error}
\label{subsection:TypeIError}

\paragraph{Regions with Known Covariance:} We first verify Type I error rates assuming the true covariance operator is known.  Each setting was repeated 50,000 times according to the following procedure: 

\begin{enumerate}
	\item Generate a sample $\{X_i\}_{i=1}^N \iid \mcN(\theta,C_\theta)$. For the mean function we take $\theta(t) := 10t^3-15t^4+6t^5$, which was used in \citet{degras:2011} and \citet{Hart:1986:KRE}.  For the covariance operator, we use a Mat\'ern covariance   
		$C_\theta(t,s) :=  \frac{.25^2}{\Gamma(\nu) 2^{\nu-1}} \left( \sqrt{2\nu}|t-s|\right)^\nu K_\nu  \left(\sqrt{2\nu}|t-s| \right)$, where  $K_\nu(\cdot)$ is the modified Bessel function of the second kind, and $\nu$ is the smoothness parameter.
	\item Find $\hat{\theta}$ from the sample, and perform hypothesis testings on $\theta_0 = 10t^3-15t^4+6t^5$, which is the same as $\theta$, based on the confidence regions using $C_\theta$, the true covariance. 
\end{enumerate}
To represent small/large sample size and rough/smooth processes, the four combinations of $N=25$, $N=100$ and $\nu=1/2,$ $\nu=3/2$ were used. Table \ref{tbl:ActualAlphaTrueCov} summarizes proportion of the rejections. 

\begin{table}[h]
	\centering
	\caption{Type I error rates with known covariance. $E_{norm}$, $E_{PC}$, and $E_{c}$ represent ellipsoid regions from norm approach, FPCA approach, and the proposed one, respectively. $B_{s}$ is the simulation based band while $B_{E_c}$ is the band based on $E_{c}$. $R_{z}$ is the proposed rectangular region and $R_{zs}$ is the small sample version of $R_{z}$, which uses only eigenfunctions (but not eigenvalues) of $C_\theta$.}
	\label{tbl:ActualAlphaTrueCov}
	\begin{tabular}{rl|rrr|rrrr}
		\hline 
		$N$   & $\nu$        & $E_{norm}$      & $E_{PC}$ & $B_{s}$  & 
		$E_{c}$ & $R_{z}$ & $R_{zs}$ &
		$B_{E_c}$  \\ \hline
		25  & $\sfrac{1}{2} \ (rough)$ & .048 & .049 & .049 & .049 & .048 & .049 & .000  \\
		25  & $\sfrac{3}{2} \ (smooth)$ & .051 & .051 & .053 & .050 & .051 & .049 & .025  \\
		100 & $\sfrac{1}{2} \ (rough)$ & .051 & .049 & .052 & .051 & .050 & .050 & .000  \\
		100 & $\sfrac{3}{2} \ (smooth)$ & .050 & .049 & .047 & .049 & .049 & .049 & .023 \\ \hline
	\end{tabular}
\end{table}

All the methods are satisfactory except for the transformed band $B_{E_c}$, which generates a conservative band as expected. For the ellipsoid and rectangular regions, up to the very last PCs were used -- trimming out only $\lambda_j < 10^{-18}$ -- and the results were still stable. Although not presented here, the results were robust against the number of PCs used.

\paragraph{Regions with Unknown Covariance:}

We now use $\hat C_\theta$ instead of $C_\theta$ in the step 2 above, 
 and use PCs to capture at least $99.9\%$ of estimated variance, i.e. took $J$ such that
$J = \min_j ( \sum_{i=1}^j \hat{\lambda}_i / \sum_{i=1}^{N-1}\hat{\lambda}_i \geq .999)$, for all ellipsoid and rectangular regions. For the FPCA based region, we  additionally took $J=3$, which explained approximately $90\%$ of the variability. Table  \ref{tbl:ActualAlpha} summarizes the proportions of the rejections and the following can be observed.
\begin{enumerate}
	\item Coverage of $\hat{E}_{PC}$ is very sensitive to the number of PCs used and works well only when the number is relatively small. This reenforces the common concern of how to best choose $J$ in practice. In contrast, $\hat{E}_{norm}$ does not have this question. Our proposed methods $\hat{E}_{c}$ and $\hat{R}_{z}$ lie somewhere between the two and choosing $J$ is not a concern as long as the very late PCs are dropped. 
	\item When $N$ is small, the small sample modification of the rectangular region ($\hat{R}_{zs}$) achieves slightly conservative but seemingly the best result. $\hat{E}_{norm}$ follows closely, possibly due to its lower dependency on later PCs. The details on $\hat{R}_{zs}$ can be found in the Appendices. 
\end{enumerate}
\begin{table}[h]
	\centering
	\caption{Type I error rates with an estimated covariance. $E_{norm}$, $E_{PC}$, and $E_{c}$ represent ellipsoid regions from norm approach, FPCA approach, and the proposed one, respectively. $B_{s}$ is the simulation based band while $B_{E_c}$ is the band based on $E_{c}$. $R_{z}$ is the proposed rectangular region and $R_{zs}$ is the small sample version of $R_{z}$, which uses only eigenfunctions (but not eigenvalues) of $C_\theta$.}
	\label{tbl:ActualAlpha}
	\begin{tabular}{rr|rrrr|rrrrrrr|r}
		\hline 
		$N$   & $\nu$  & $\hat{E}_{norm}$ & $\hat{E}_{PC}$ & $\hat{E}_{PC(3)}$ & $\hat{B}_{s}$ &
		$\hat{E}_{c}$   & $\hat{R}_{z}$ & $\hat{R}_{zs}$  &
		$\hat{B}_{E_c}$  &  $PC^*$   \\ \hline
		25  & $\sfrac{1}{2}$ & .057 & .162 & .069 & .087 & .071 & .069 & .041 & .013 &  21  \\
		25  & $\sfrac{3}{2}$ & .061 & .132 & .090 & .071 & .068 & .068 & .047 & .039 &  5  \\
		100 & $\sfrac{1}{2}$ & .052 & .255 & .056 & .058 & .060 & .059 & .050 & .001 &  53  \\
		100 & $\sfrac{3}{2}$ & .052 & .066 & .057 & .054 & .053 & .053 & .049 & .026 &  5  \\ \hline
	\end{tabular}
	\begin{center}
		\textit{* Median number of PCs required to capture $\geq 99.9\%$ of estimated variance.}
	\end{center} 
\end{table}
We emphasize here the dependence on choosing $J$ for both the FPCA and our new approach.  As is well known, FPCA based methods are very sensitive to the choice of $J$ as it places all eigenfunctions on an ``equal footing''.  However, later eigenfunctions are often estimated very poorly, which can result in very bad type 1 error rates when $J$ is taken too large.  In contrast, our approach is not as sensitive to the choice of $J$ since later eigenfunctions are down weighted.  In our simulations, they remained well calibrated as long as the very late FPCs are dropped, e.g. after capturing $99\%$ of the variance.

\subsubsection{Comparing Power}

To compare the power of the hypothesis tests, we gradually perturb $\theta$ -- the actual sample generating mean function -- from $\theta_0$ by an amount $\Delta \in \mbR$. To emulate what one might encounter in practice, three scenarios are considered: 
\begin{enumerate}
	\item shift: $\theta_0(t) = 10t^3-15t^4+6t^5$, \quad $\theta(t) = \theta_0(t) + \Delta$, 
	\item scale: $\theta_0(t) = 10t^3-15t^4+6t^5$, \quad $\theta(t) = \theta_0(t)(1+\Delta)$,
	\item local shift: $\theta_0(t) = \max\left\{0, -10|t-0.5| + 1 \right\}$, \quad $\theta(t) = \max\left\{0, -10|t-0.5| + 1+\Delta \right\}$. 
\end{enumerate}

A visual representation of the three scenarios can be found in the left column of Figure \ref{fig:PowerComparison}.  We estimate $C_\theta$ throughout, reduce the number of repetitions to 10,000, and use the same combinations of the sample size ($N=25, \ 100$) and smoothness ($\nu=$\sfrac{1}{2}$, \ $\sfrac{3}{2}$ $). For the $\hat{E}_{PC}$ method, the first 3 PCs were again used to ensure an acceptable Type I error. For other ellipsoid and rectangular regions, $J$ was taken to explain approximately  99.9\% of the variance as in the previous section. Power plots for $N=100$ and $\nu=\sfrac{1}{2}$ can be found in the right column of Figure \ref{fig:PowerComparison}, and a summary is given in Table \ref{tbl:AveragePower]}. The result for other combinations of sample size and smoothness can be found in the Appendices, but they all lead to the same conclusions:
\begin{enumerate}
\item In scenario 1, $\hat{E}_{PC(3)}$ has the lowest power while the other regions performs similarly. 
\item In scenario 2, $\hat{E}_{PC(3)}$ has the highest power while $E_{norm}$ has the lowest.  Our hyper-ellipse method $\hat E_c$ has only slightly less power than the FPCA method. Our rectangular method, $\hat R_z$, and the band of \citet{degras:2011} have about the same power, but both are lower than the ellipse. 
\item In Scenario 3, our proposed regions $\hat{E}_{c}$ and $\hat{R}_{z}$ far outperform existing ones. Note that $\theta$ differs from $\theta_0$ only on a fraction of the domain and the size of the departure is also small. Due to the small $\| \theta - \theta_0\|$, therefore, $\hat{E}_{norm}$ performs the worst. The FPCA method, $\hat{E}_{PC(3)}$ and Degras's band fall quite a bit behind our proposed methods, but still better than the norm approach. The $\hat{E}_{PC(3)}$ performs much better when the process is smooth and therefore the `signal' is captured in earlier dimensions -- although it still falls short from the proposed ones.       
\end{enumerate}

As a conclusion, we recommend using $\hat E_c$ in practice for hypothesis testing purposes. We base this recommendation on  1) its power is at the top or near the top in every scenario; 2) its type I error is well-maintained as long as very late PCs are dropped; 3) it is less sensitive to the number of PCs used as long as the number is reasonably large; 4) it is easy to compute; and 5) it can be used to construct a band. Being able to make this recommendation is quite substantial as previous work has focused on the norm versus PC approach, where clearly one does not always outperform the other \citep{ReNi:2014}.
\begin{table}
	\centering
	\caption{Average Power over $\Delta$ for each Scenario}
	\label{tbl:AveragePower]}
	\begin{tabular}{l|rrr|rrr}
		\hline 
		Scenario &  $\hat{E}_{norm}$ & $\hat{E}_{PC(3)}$ & $\hat{B}_{s}$ &
		$\hat{E}_{c}$   & $\hat{R}_{z}$ & $\hat{R}_{zs}$ \\ \hline
		1. Shift       &  .623 & .560 & .617 & .625 & .607 & .598  \\
		2. Scale       &  .411 & .549 & .503 & .522 & .496 & .480  \\
		3. Local Shift &  .234 & .568 & .504 & .759 & .770 & .749  \\ \hline
	\end{tabular}
\end{table}
\begin{figure}
	\centering
	\makebox{\includegraphics[width=\textwidth]{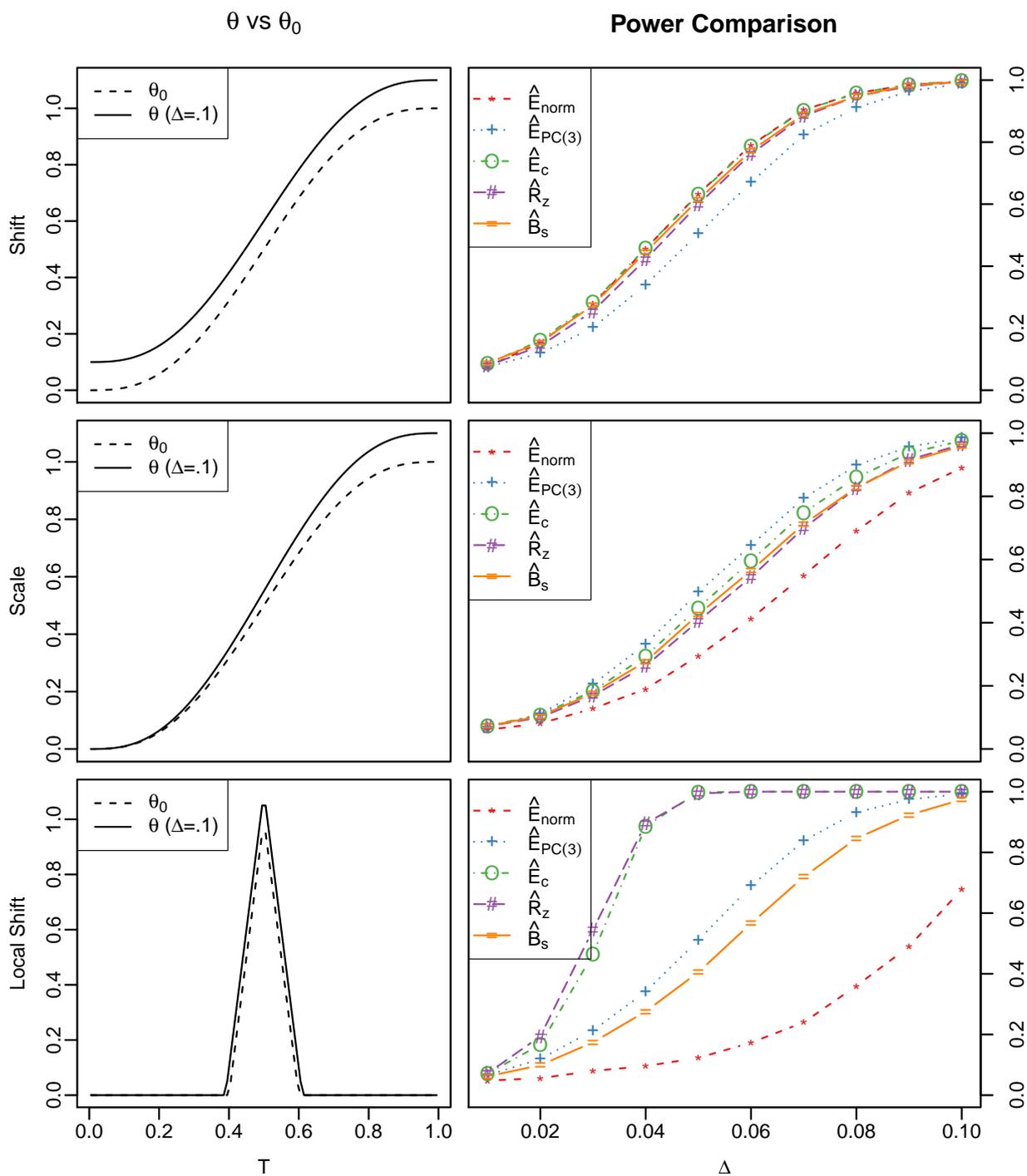}}
	\caption{\label{fig:PowerComparison}Power Comparison for each Scenario ($N=100, \ \nu=1/2$)}
\end{figure}

\subsection{Comparison of Bands}
\label{subsection:ComparisonOfBands}
In this section we compare the shape of $\hat{B}_{E_c}$ with $\hat{B}_{s}$, the two band forms of confidence regions, along with point-wise 95\% confidence intervals denoted as `naive-t'.   For this purpose, we consider three different scenarios regarding the smoothness of $\hat{\theta}$. The procedure can be summarized as follows:

\begin{enumerate}
	\item Generate a sample $\{X_i\}_{i=1}^N \iid \mcN(\theta,C_\theta)$, using the same mean function $\theta(t)$ as  in Section \ref{subsection:TypeIError}. For the covariance operator, three scenarios are considered:
	\begin{enumerate}
		\item The same Mat\'ern covariance in Subsection \ref{subsection:TypeIError} with $\nu = \sfrac{1}{2}$ (rough).
		\item The same Mat\'ern covariance in Subsection \ref{subsection:TypeIError} with $\nu = \sfrac{3}{2}$ (smooth).
		\item $C_\theta(t,s) :=  \frac{.25^2}{\Gamma(\nu) 2^{\nu-1}} \left( \sqrt{2\nu}|t^{10}-s^{10}|\right)^\nu K_\nu  \left(\sqrt{2\nu}|t^{10}-s^{10}| \right)$ with $\nu=\sfrac{1}{2}$. This generate processes that transition from smooth to rough by `warping' the domain. 
	\end{enumerate}
	\item Find $\hat{\theta}$ and $\hat{C}_\theta$, and generate symmetric bands around $\hat{\theta}$ using $\hat{C}_\theta$.
\end{enumerate}

Figure \ref{figure:ConfidenceBands} shows sample paths ($N=25$) from the three different covariance operators on the first row,  their 95\% simultaneous confidence bands on the second row, and local coverage rates on the third row. The findings can be summarized as follows :

\begin{enumerate}
	\item The proposed band $\hat{B}_{E_c}$ is wider than $\hat{B}_{s}$ for rougher processes, but almost identical to $\hat{B}_{s}$ for smoother ones, except for the far ends of the domain.
	\item In the third case, $\hat{B}_{E_c}$ is narrower than $\hat{B}_{s}$ in the smoother areas, while $\hat{B}_{s}$ maintains the same width.  $\hat{B}_{E_c}$ adjusts its width such that it gets narrow in the smooth areas (higher within curve dependence) and wider in the rough areas.
	\item Due to its construction, $\hat{B}_{E_c}$ does not bear any local under-coverage issue, and therefore any pattern in the third row in Figure \ref{figure:ConfidenceBands} can be rather related to its over-coverage. 
\end{enumerate} 

\begin{figure}[h]
	\centering
	\makebox{\includegraphics[width=\textwidth]{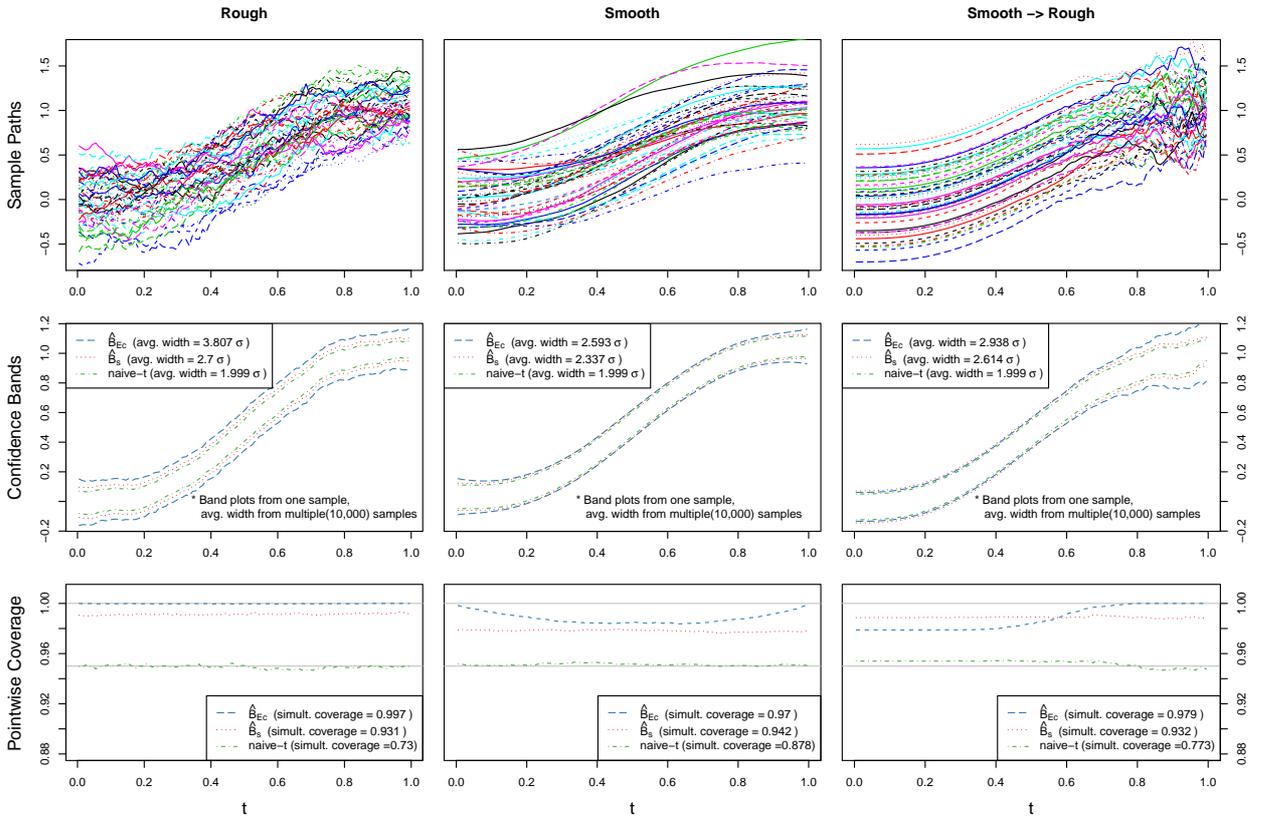}}
	\caption{\label{figure:ConfidenceBands} For each column, we have sample paths from a sample (1st row), $95\%$ confidence bands constructed from the sample (2nd row), and point-wise coverage rates from multiple (10,000) samples (3rd row). The `ave. width's in legends of 2nd row are averaged over the domain $[0,1]$ and over the samples, and was shown as multiple of point-wise (true) standard deviation.}
\end{figure}

We conclude that the confidence band $\hat{B}_{E_c}$ is an effective visualization tool to use in practice especially when the estimate $\hat{\theta}$ is relatively smooth. For smoother estimates, it is nearly identical to the parametric bootstrap but is much faster to compute since it requires no simulation. This is important as our band is conservative, utilizing a Scheff\'e-type inequality. It suggests that not much is lost in using such an approach as long as the parameter estimate is sufficiently smooth.  If the hypothesis tests and our confidence bands are in disagreement, say due to the conservative nature of $\hat{B}_{E_c}$, then it is recommended to follow up with a parametric bootstrap to get tighter bands.

\subsection{Simulation based on DTI data}
\label{subsection:SimulationOnData}
Although the mean and the covariances in Subsection \ref{subsection:HT} and \ref{subsection:ComparisonOfBands} are chosen to mimic common functional objects, actual data in practice may show much more complex structures. In this subsection we use a mean and a covariance structure from the \texttt{DTI} dataset in the \texttt{R} package \texttt{refund}. \textit{This DTI data were collected at Johns Hopkins University and the Kennedy-Krieger Institute}.  More details on this data set can be found in \citet{goldsmith:2012a} and \citet{goldsmith:2012b}.  This dataset contains fractional anisotropy tract profiles of the corpus callosum for two groups -- healthy control group and multiple sclerosis case group, observed over 93 locations. In this subsection, we took only the first visit scans of the case group in which the sample size is 99. We will denote this sample as original sample. 

First, we estimated the sample mean and covariance from the original sample and considered them as parameters. While the mean was estimated by penalizing $2^{nd}$ derivative with leave-one-out cross-validation to achieve a smooth mean function, the covariance was estimated using the standard method (but using the smoothed mean) to mimic the roughness of the original sample.
Using these mean and covariance, we generated multiple (10,000) Gaussian simulation samples of the sample size $99$.  
The use of Gaussian sample could be justified by the distribution of coefficients on each principal components in the original sample.

For each generated sample, two different estimation procedures were taken to look at the effect of smoothing. First approach is to simply smooth the sample using quardic bspline basis (with equally spaced knots) and use the standard estimates, and the second approach is to directly smooth the mean function by penalizing $2^{nd}$ derivative (or curvature), in which the covariance was estimated accordingly as shown in the Appendices. For both approaches, leave-one-out cross validation was used to find the number of basis functions and the penalty size, respectively. To reduce the computation time, those values were pre-determined from the original sample and applied to the simulation samples. 

Other than the explicit differences in approaches for  smoothing -- data first vs directly on the estimate, and bspline vs penalty on curvature --, the first smoothing would introduce bias because the 15 bspline functions would not fully recover the assumed mean function. The empirical bias from the first smoothing was $6.1$ times larger than the second one. 

Figure \ref{figure:PointwiseCoverage} compares confidence bands from the two smoothing schemes. Although they do not snow any material difference in the shapes of the bands, we get slightly narrower bands from bspline smoothing (left column). This may cause under-coverage for $\hat{B}_s$, but does not work adversely for the proposed band $\hat{B}_{E_c}$ which generally provides over-coverage. While the narrow band for $\hat{B}_{E_c}$ is mainly caused by more explicit dimension reduction or more smoothing, but for $\hat{B}_s$ and naive-t, bias seems to be the main source of it -- This can be supported by the local coverage patterns in the figure.
\begin{figure}[h]
	\centering
	\makebox{\includegraphics[width=\textwidth]{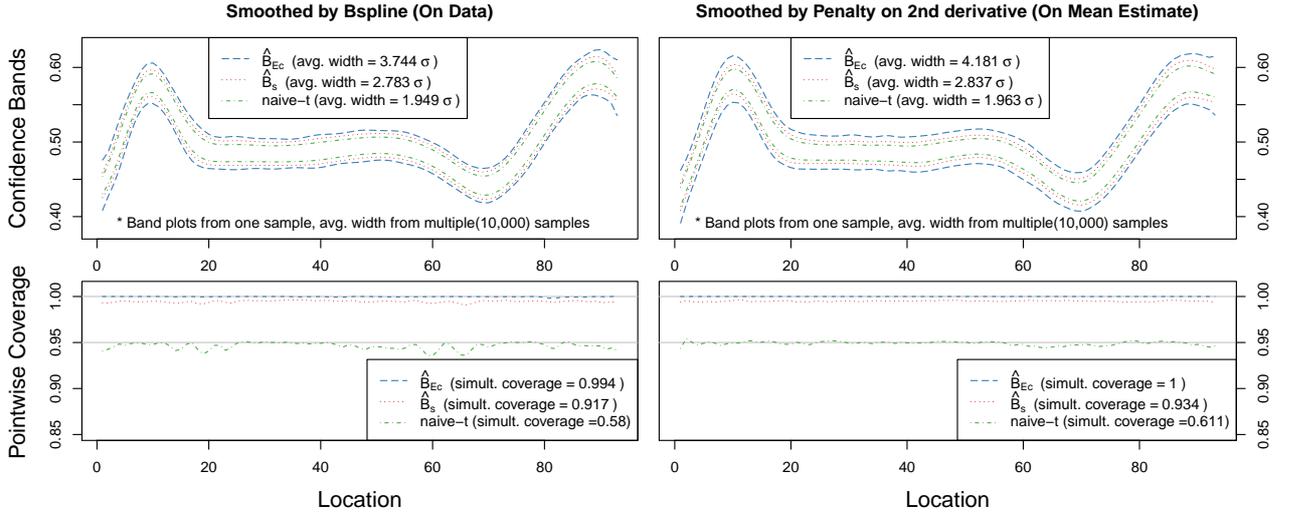}}
    \caption{\label{figure:PointwiseCoverage}The 95\% confidence bands from a simulation sample (1st row) along with point-wise coverage rates from multiple samples (2nd row).}
\end{figure}

Table \ref{tbl:ActualAlpha_Sim2} compares coverage rates of non-band form regions using two different smoothing schemes and cutting  points for $J$. 
Note that we see only minor difference between the two smoothing approaches, and the effect of $J$ is essentially the same as in Section \ref{subsection:TypeIError}; the coverage of FPC based method $\hat{E}_{PC}$ deteriorate fast as $J$ increases while $\hat{E}_{norm}$ it not affected, and $J$ that explains about $99\%$ of variance does not raise major concern in the proposed regions $\hat{E}_{c}$, $\hat{R}_{z}$, and $\hat{R}_{zs}$.

\begin{table}[h]
	\centering
	\caption{Coverage rates of non-band form regions using two different smoothing approaches}
	\label{tbl:ActualAlpha_Sim2}
	
	\begin{tabular}{r|rr|rrr|r||rr|rrr|r}
		\hline
		Smoothing & \multicolumn{6}{c||}{Bspline on the sample (15 functions)} & 
		\multicolumn{6}{c}{Penalty on the $2^{nd}$ derivative} \\
		\hline 
		var. $\geq$  & $\hat{E}_{norm}$ & $\hat{E}_{PC}$  &
		$\hat{E}_{c}$   & $\hat{R}_{z}$ & $\hat{R}_{zs}$ &  $PC^*$   
		& $\hat{E}_{norm}$ & $\hat{E}_{PC}$  &
		$\hat{E}_{c}$   & $\hat{R}_{z}$ & $\hat{R}_{zs}$ &  $PC^*$ 
		\\ \hline
		0.90  &  .949 & .942 & .948 & .947 & .952 &  5   &  .949 & .942 & .946 & .948 & .954 &  5   \\
		0.95  &  .949 & .937 & .946 & .945 & .951 &  7   &  .949 & .933 & .944 & .946 & .953 &  8   \\
		0.99  &  .949 & .904 & .940 & .939 & .947 &  11  &  .949 & .887 & .937 & .940 & .949 &  15  \\
		0.999 &  .949 & .849 & .935 & .933 & .943 &  15  &  .948 & .572 & .922 & .904 & .926 &  24  \\ \hline
	\end{tabular}
	\begin{center}
		\textit{* Median number of PCs required to capture desired (estimated) variance.}
	\end{center} 
	
\end{table}

\section{Data Example}\label{s:dti}

In this section, we further illustrate the usage of suggested methods using the same \texttt{DTI} dataset.
We now take both control and case group of the first visit scans to look at their differences in mean, in which the sample sizes are 42 and 99, respectively.

\subsection{Visualization via Bands}
\label{subsection:data_visualization}
The first step is to visually compare the two sample mean functions, and make confidence bands for the mean difference. Figure \ref{figure:DTIBand} shows the two sample means, followed by 95\% confidence bands for the mean difference using $\hat{B}_{E_c}$ and  $\hat{B}_{s}$, assuming unequal variances. 
Although the proposed band $\hat{B}_{E_c}$ is wider than $\hat{B}_{s}$ when the standard estimates from the raw data are used (middle), it gets narrower when the data are smoothed (right). For smoothing, we used quadric bsplines with two-fold cross-validations on the mean difference to choose the number of basis functions. In this case 11 basis functions were chosen and we used equally spaced knots.
We observe that the bands do not cover zero($0$) for most of the domain except for the beginning and the very end part.

\begin{figure}[h]
	\centering
	\makebox{\includegraphics[width=\textwidth]{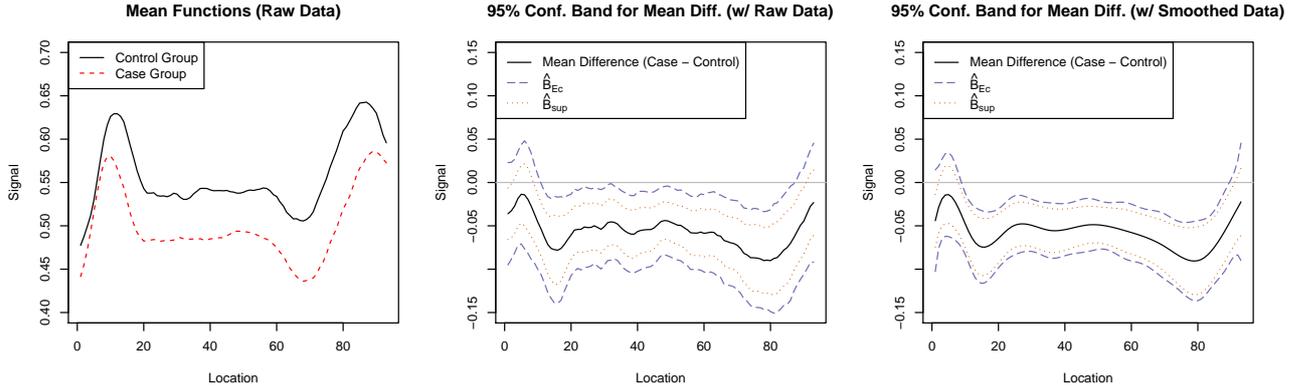}}
    \caption{\label{figure:DTIBand}The sample mean functions for the two groups (left) and the difference of these two along with $95\%$ confidence bands (middle, right). For smoothed estimates (right), the gap between $\hat{B}_{E_c}$ and $\hat{B}_{s}$ narrows down.}
\end{figure}

\subsection{Hypothesis Testing}
The result of hypothesis testing $H_0 : \mu_{\text{ctrl}} = \mu_{\text{case}}$ versus $H_0 : \mu_{\text{ctrl}} \neq \mu_{\text{case}}$ using different regions is summarized in Table \ref{tbl:DTITesting}. The proposed regions $\hat{E}_{c}$ and $\hat{R}_{z}$ yield at least  comparable p-values with existing ones like $\hat{E}_{norm}$ and $\hat{E}_{PC(3)}$. Since there exists an overall shift in the difference of the mean functions, little room could be found for the proposed regions to outperform $\hat{E}_{norm}$. Small sample version $\hat R_{zs}$ achieves a bit larger p-value as expected, but not materially. 

\begin{table}[h]
	\centering
	\caption{P-values from hypothesis testings based on different regions.}
	\label{tbl:DTITesting}
	\begin{tabular}{r|r|rrr|rrr|r}
		\hline 
		Data & Var. $\geq$  & $\hat{E}_{norm}$  & $\hat{E}_{PC}$ & $\hat{E}_{PC(3)}$ & $\hat{E}_{c}$ & $\hat{R}_{z}$ & $\hat{R}_{zs}$ & $PC^*$     \\ \hline
		Raw  & 0.99  & $6.6E^{-14}$ & $2.6E^{-10}$ & $2.1E^{-13}$ & $2.3E^{-14}$  & $2.5E^{-13}$ & $2.1E^{-11}$ & 22  \\ 
		Smoothed  & 0.99  & $1.6E^{-13}$ & $4.0E^{-14}$ & $8.7E^{-14}$ & $1.1E^{-14}$  & $2.2E^{-13}$ & $1.9E^{-11}$ & 11  \\ 
		\hline		
	\end{tabular}
	\begin{center}
		\textit{* Number of PCs used to capture desired variance except for $\hat{E}_{PC(3)}$ which uses only $3$ PCs}
	\end{center}	
\end{table}

In \citet{pomann:2016} two sample tests were developed and illustrated using the same data.  There they use a bootstrap approach to calculate p--values. A p--value of approximately zero is reported based on 5000 repetitions, which means that the p-value $< 2 \times 10^{-4}$.  Since our approach is based on asymptotic distributions, not simulations, we are able to give more precise p--values which are of the order $10^{-14}$ for the lowest and $10^{-11}$ for the highest.

\subsection{Visual Decomposition using Rectangular Region}
\label{s:rectangle-visual}
One merit of a rectangular region is that it can be expressed as intersection of marginal intervals. Note that since eigenfunctions are uniquely determined up to signs, it does not help to look at the signs of coefficients.
Figure \ref{figure:DTI_Marginals} shows confidence intervals for the absolute values of   coefficients for each PC using $\hat{R}_{z}$. We observe that only the confidence interval for the first PC does not cover zero. Based on this, we can infer that there exists a significant difference between the two mean functions along the $1^{st}$ PC, but the two means are not significantly different in any other features. In this sense, this visual decomposition serves as hypothesis testings on PCs while maintaining family-wise level at $\alpha$.
Although we made intervals for absolute coefficients to visually represent the importance of each PCs, one may choose to make intervals for absolute $z$-scores to make later intervals more visible.

\begin{figure}[h]
	\centering
	\makebox{\includegraphics[width=6.5in]{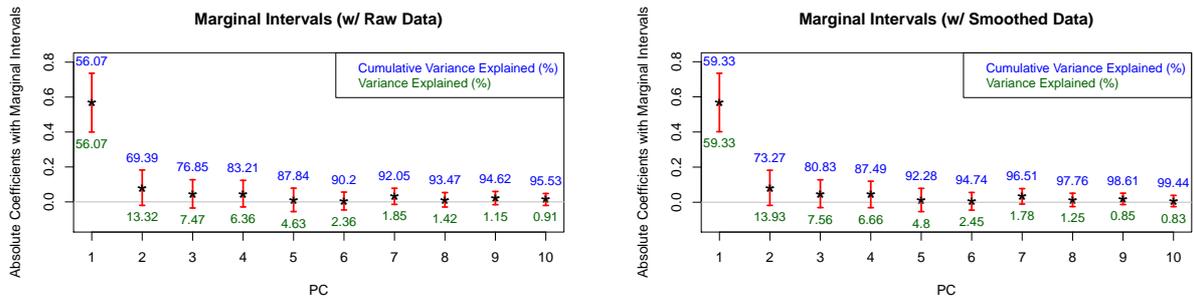}}
	\caption{\label{figure:DTI_Marginals}Confidence intervals for absolute coefficients along principal components. Each interval is centered at the absolute value of coefficient of each estimated PC, i.e. $| \langle \hat{\theta}, \hat v_j \rangle |$ for $j$-th PC, and the interval presents `reasonable candidates' for $| \langle \theta, v_j \rangle |$. The length of each interval can be used to roughly measure the importance of corresponding PC. The intervals are shown up to $10^{th}$ PC to maintain visibility but actual rectangular region using raw data used up to $22^{nd}$ PCs to capture at least $99\%$ of variance.} 
\end{figure}

Once the overall shapes of confidence intervals are obtained, one may choose to examine specific PCs.  
Figure \ref{figure:DTICoefPC1} shows the interval for the $1^{st}$ PC as a band along the $1^{st}$ eigenfunction. This now reveals that the departure is caused by the `downward' shift of the case group, and confirms that this is the main source of the mean departure in Figure \ref{figure:DTIBand}.  
Lastly, we mention that smoothing here also makes little difference in the `shapes' of the intervals in Figure \ref{figure:DTI_Marginals} and \ref{figure:DTICoefPC1} except for the effect of smoothing itself -- smoother ($1^{st}$) eigenfunction  and more variance captured in early PCs.

\begin{figure}[h]
	\centering   
	\makebox{\includegraphics[width=6.5in]{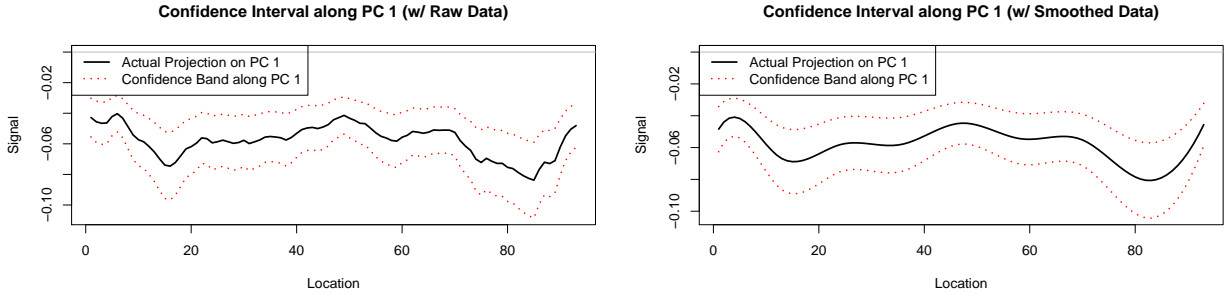}}
	\caption{\label{figure:DTICoefPC1} Representation of confidence interval for the $1^{st}$ PC as confidence band along $1^{st}$ eigenfunction.}
\end{figure}

\section{Discussion}

Each of the proposed and existing regions (and the corresponding hypothesis tests) has pros and cons, and therefore the decision on which region to use in practice would depend on many factors including the nature of the data, the purpose of the research, etc. However, we believe that we have clearly demonstrated that the proposed hyper-ellipses, $\hat E_c$, or hyper-rectangles, $\hat R_z$, make excellent candidates as the ``default" of choice. In our simulations, they were at the top or near the top, in terms of power, in every setting.  Deciding between ellipses versus rectangles comes down to how the regions will be used.  If the focus is on the principal components and interpreting their shapes, then the rectangular regions make an excellent choice.  If the FPCs are of little to no interest, then the hyper--ellipses combined with their corresponding band make an excellent choice, especially if the parameter estimate is relatively smooth.  However, for rougher estimates, we recommend sticking with the simulation based bands like \citet{degras:2011} as opposed to the bands generated from the ellipses.

We also believe that the discussed perspectives on coverage and ghosting will be useful for developing and evaluating new methodologies.  From a theoretical point of view, working with infinite dimensional parameters presents difficulties which are not found in scalar or multivariate settings.  In particular, it is common for methods to ``clip" the infinite dimensional parameters.  In practice the clipping may or may not have much of an impact -- for example the FPCA methods are very sensitive to this clipping while our ellipses and rectangles are not -- but in all cases it introduces an interesting theoretical challenge.  Our ghosting framework will be useful as it provides a sound basis for using regions with deficient coverage. 

For the first time, the construction of confidence regions and bands has been placed into a Hilbert space based framework together which has become a primary model for many FDA methodologies.   However, we believe there is a great deal of additional work to be done in this area and that it presents some exciting opportunities.  For example, are there other metrics for determining which confidence region to use?  Do these metrics lead to different choices of $c_j$?  
How can we choose $J$, the number of PCs to use in practice without undermining proper coverage considering poor estimation of later PCs?  
 Are there other shapes beyond ellipses and rectangles which are useful?  Can we use better metrics than Hausdorff for evaluating convergence?  Many open questions remain which we hope other researchers will find interesting.   


\bibliographystyle{rss}
\bibliography{jrss-b,biom1960,nb,fda}

\newpage

\appendix

\section{Proofs}	
In this section we gather all of the proofs and necessary lemmas.

\begin{proof}[Proof of Theorem \ref{t:band}]
	The $\mcH$ norm of $r(x)$ is given by 
	\[
	\| r\|^2 = \frac{\xi}{N}\sum_{j=1}^\infty c_j^2.
	\]
	This will be finite if $\sum_{j=1}^\infty c_j^2 < \infty$ and $|\xi| < \infty$, the latter of which is guaranteed when  $\sum \lambda_j c_j^{-2}<\infty $.  Therefore $r(x)$ is in $\mcH$. 
	
	To show $E_{\hat{\theta}} \subset B_{\hat{\theta}}$, take $h \in E_{\hat{\theta}}$. Using the Cauchy-Schwartz inequality and \eqref{e:EllipsoidBound}, we get  
	\begin{align*}
	\left( h(x) - \hat{\theta}(x) \right)^2 &= \left(\sum_{j=1}^\infty \langle h-\hat{\theta}, v_j \rangle v_j(x) \right)^2 
	= \left(\sum_{j=1}^\infty \frac{\langle \sqrt{N}(h-\hat{\theta}), v_j \rangle}{c_j} \frac{1}{\sqrt{N}}c_jv_j(x)  \right)^2 \\
	& \leq  \sum_{j=1}^\infty \frac{\langle \sqrt{N}(h-\hat{\theta}), v_j \rangle^2}{c_j^2}   \sum_{j=1}^{\infty} \frac{1}{N}c_j^2 v_j^2(x) \\  
	& \leq \sum_{j=1}^{\infty} \frac{\xi}{N}c_j^2 v_j^2(x)  \equiv r^2(x), 
	\end{align*}
	for $x$ \textit{almost everywhere}, which then implies $h \in B_{\hat{\theta}}$ and thus  $E_{\hat{\theta}} \subset B_{\hat{\theta}}$ as desired.  

\end{proof}

\begin{lemma}
	\label{lemma:DiffinNormExpansion}
	Define $\alpha_1 := \lambda_1 - \lambda_2$ and $\alpha_j := \min\{ \lambda_j - \lambda_{j+1}, \lambda_{j-1} - \lambda_{j} \}$ for $j = 2, \dots$.  Then with probability 1
	\begin{align}
	\label{e:DiffinNormExpansion}
	\| \hat v_j - v_j \| \leq \frac{ 2 \sqrt{2} \| \hat C_\theta - C_\theta\|_{\mcL}}{ \alpha_j}
	\qquad \text{and} \qquad
	| \hat \lambda_j - \lambda_j| \leq \| \hat C_\theta - C_\theta\|_{\mcL}.
	\end{align}	
	\begin{proof}
		See \citet{hkbook}.
	\end{proof}
\end{lemma}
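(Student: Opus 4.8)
The plan is to prove the two inequalities separately, treating each as a deterministic (pathwise) perturbation bound for compact self-adjoint operators; the ``with probability one'' qualifier then follows immediately, since the inequalities hold for every realization of $\hat C_\theta$. Both $C_\theta$ and $\hat C_\theta$ are self-adjoint, compact, and nonnegative, so each has a spectral decomposition with ordered eigenvalues and an orthonormal eigenbasis. Because eigenfunctions are defined only up to sign, I would first normalize $\hat v_j$ so that $\langle \hat v_j, v_j \rangle \ge 0$; this is the orientation implicitly assumed in the statement and it guarantees $\|\hat v_j - v_j\| \le \sqrt{2}$.

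For the eigenvalue bound I would use the Courant--Fischer min--max characterization, writing $\lambda_j = \max_{\dim V = j}\min_{v \in V,\,\|v\|=1}\langle C_\theta v, v\rangle$ and the identical formula for $\hat\lambda_j$ with $\hat C_\theta$. Since for any unit vector $v$ we have $|\langle \hat C_\theta v, v\rangle - \langle C_\theta v, v\rangle| = |\langle (\hat C_\theta - C_\theta)v, v\rangle| \le \|\hat C_\theta - C_\theta\|_{\mcL}$, the two nested variational problems differ uniformly by at most $\|\hat C_\theta - C_\theta\|_{\mcL}$, which yields $|\hat\lambda_j - \lambda_j| \le \|\hat C_\theta - C_\theta\|_{\mcL}$ directly.

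For the eigenfunction bound I would expand $\hat v_j$ in the orthonormal eigenbasis $\{v_k\}$ of $C_\theta$ (completed to a basis of $\mcH$ if necessary) and extract the off-diagonal coefficients. Pairing the eigen-relation $\hat C_\theta \hat v_j = \hat\lambda_j \hat v_j$ against $v_k$ and using $C_\theta v_k = \lambda_k v_k$ gives, for every $k \ne j$,
\[
(\hat\lambda_j - \lambda_k)\langle \hat v_j, v_k\rangle = \langle (\hat C_\theta - C_\theta)\hat v_j, v_k\rangle,
\]
so that $|\langle \hat v_j, v_k\rangle| \le |\hat\lambda_j - \lambda_k|^{-1}\|\hat C_\theta - C_\theta\|_{\mcL}$. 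Squaring, summing over $k \ne j$, and using $\sum_{k\ne j}\langle (\hat C_\theta - C_\theta)\hat v_j, v_k\rangle^2 \le \|(\hat C_\theta - C_\theta)\hat v_j\|^2 \le \|\hat C_\theta - C_\theta\|_{\mcL}^2$ controls the total off-diagonal mass. Finally, since the chosen sign gives $\|\hat v_j - v_j\|^2 = 2\bigl(1 - \langle \hat v_j, v_j\rangle\bigr) \le 2\sum_{k\ne j}\langle \hat v_j, v_k\rangle^2$, the eigenfunction bound reduces to a lower bound on the denominators $|\hat\lambda_j - \lambda_k|$.

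The main obstacle is exactly this uniform lower bound on $|\hat\lambda_j - \lambda_k|$ over all $k \ne j$, which is where the gap $\alpha_j$ and the constant $2\sqrt 2$ are produced. The argument is a case split: when $\|\hat C_\theta - C_\theta\|_{\mcL} < \alpha_j/2$, the eigenvalue bound prevents $\hat\lambda_j$ from drifting past the midpoints of its neighbouring gaps, so $|\hat\lambda_j - \lambda_k| \ge |\lambda_j - \lambda_k| - \|\hat C_\theta - C_\theta\|_{\mcL} \ge \alpha_j/2$ for all $k \ne j$; substituting this into the sum above yields $\|\hat v_j - v_j\|^2 \le 8\,\alpha_j^{-2}\|\hat C_\theta - C_\theta\|_{\mcL}^2$, i.e. the factor $\sqrt 8 = 2\sqrt 2$. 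In the complementary regime $\|\hat C_\theta - C_\theta\|_{\mcL} \ge \alpha_j/2$ the claimed inequality is trivial, since its right-hand side is then at least $\sqrt 2 \ge \|\hat v_j - v_j\|$. Rather than reproduce this standard calculation, I would cite the perturbation bounds for covariance operators as given in \citet{hkbook}.
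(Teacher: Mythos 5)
Your proposal is correct and ends exactly where the paper does: the paper's entire proof is the citation to \citet{hkbook}, and the argument you sketch (Courant--Fischer for the eigenvalues; sign normalization, the off-diagonal identity $(\hat\lambda_j-\lambda_k)\langle\hat v_j,v_k\rangle=\langle(\hat C_\theta-C_\theta)\hat v_j,v_k\rangle$, and the $\alpha_j/2$ case split for the eigenfunctions) is precisely the standard derivation behind that cited result, with the constant $2\sqrt2$ coming out correctly. No discrepancy with the paper's approach.
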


\begin{proof}[Proof of Theorem \ref{thm:Convergence1}]
	Our aim is to show that for any $x \in \hat{E}_{\hat{\theta}}$ there exists $y \in E_{\hat{\theta}}$, s.t. $\| y - x\|$ is bounded by the RHS of \eqref{e:order1}. 
	For any $x \in \hat{E}_{\hat{\theta}}$, take $y \in \cH$ s.t., $y = \hat{\theta}  + \sum_{j=1}^{J} \langle  x - \hat{\theta}, \hat{v}_j \rangle v_j$.
	We then have that
	$$\sum_{j=1}^\infty \frac{\langle y - \hat{\theta} , v_j\rangle^2}{N^{-1} c_j^2} 
	= \sum_{j=1}^{J} \frac{\langle x - \hat{\theta} , \hat{v}_j\rangle^2}{N^{-1}c_j^2}
	\leq \xi,$$
	which implies that $y \in E_{\hat{\theta}}$ follows from $x \in  \hat{E}_{\hat{\theta}}.$  Turning to the difference between $x$ and $y$ we have that
	\begin{align*}
	\| y - x \|^2
	&= \| (y - \hat{\theta}) - (x - \hat{\theta}) \|^2 \\
	&= \left\| \sum_{j=1}^{J} \langle  x - \hat{\theta}, \hat{v}_j \rangle v_j - \sum_{j=1}^{J} \langle  x - \hat{\theta} , \hat{v}_j \rangle \hat{v}_j \right\|^2 
	= \left\| \sum_{j=1}^{J} \langle  x- \hat{\theta} , \hat{v}_j \rangle (v_j - \hat{v}_j) \right\|^2.
	\end{align*} 
	From 
	Cauchy-Schwarz inequality, the above is bounded by
	\[
	\sum_{j=1}^J \langle  x- \hat{\theta} , \hat{v}_j \rangle ^2 \sum_{j=1}^J \| v_j - \hat v_j\|^2
	\leq N^{-1} \xi  c_1^2 \sum_{j=1}^J \| v_j - \hat v_j\|^2,
	\]
	which holds uniformly in $x$.  
	Using Lemma \ref{lemma:DiffinNormExpansion} we get
	\begin{align*}
	\sum_{j=1}^{J} \left\| v_j - \hat{v}_j \right\|^2 
	& \leq  \sum_{j=1}^{J}  \frac{8 \|\hat C_\theta - C_\theta\|_\mcL^2 }{\alpha_j^{2}}. 
	\end{align*}
	Therefore, 
	\begin{align*}
	\rho(\hat{E}_{\hat{\theta}},E_{\hat{\theta}}) ^2 \leq  \sum_{j=1}^{J}  \frac{8 \xi  c_1^2 \|\hat C_\theta - C_\theta\|_\mcL^2 }{N \alpha_j^{2}}
	\end{align*}
	as claimed.
\end{proof}

\begin{proof}[Proof of Theorem \ref{thm:Convergence2}]
	Again, we aim to show that for any $y \in E_{\hat{\theta}}$ there exists $x \in \hat{E}_{\hat{\theta}}$, such that $\| y - x\|$ achieves the claimed bound. Take $y_J := \hat{\theta} + \sum_{j=1}^J \langle y - \hat{\theta}, v_j \rangle v_j$, and $x := \hat{\theta} + \sum_{j=1}^J \langle y - \hat{\theta}, v_j \rangle \hat{v}_j$.  As before, $x \in \hat{E}_{\hat{\theta}}$ follows from $y \in E_{\hat{\theta}}$.  
	We then use a triangle inequality to obtain
	\begin{align}
	\label{e:local1}
	\|y - x\| \leq \|y - y_J \| + \|y_J - x \|. 
	\end{align}
	The first term is bounded by
	\begin{align*}
	\|y - y_J\|^2 
	& = \sum_{j=J+1}^\infty \langle y - \hat{\theta}, v_j\rangle^2 
	= \sum_{j=J+1}^\infty c_j^2 \frac{\langle y - \hat{\theta}, v_j\rangle^2}{c_j^2} 
	\leq c_{J}^2  \sum_{j=J+1}^\infty  \frac{\langle y - \hat{\theta}, v_j\rangle^2}{c_j^2} 
	\leq  c_J^2  N^{-1} \xi,
	\end{align*} 
	uniformly in $y$.  Using the same arguments as in Theorem \ref{thm:Convergence1} we obtain
	\begin{align*}
	\|y_J - x\|^2  & = \left\| \sum_{j=1}^J \langle y - \hat{\theta} , v_j \rangle (v_j - \hat{v}_j) \right\|^2 
	\leq   \sum_{j=1}^{J}  \frac{8 \xi  c_1^2 \|\hat C_\theta - C_\theta\|_\mcL^2 }{N \alpha_j^{2}} 
	\end{align*}
	uniformly in $y$ as well.  Therefore, 
	\begin{align*}
	\rho(E_{\hat{\theta}}, \hat{E}_{\hat{\theta}})  = & \left[c_J^2  N^{-1} \xi \right]^{\frac{1}{2}}+ 
	\left[ \sum_{j=1}^{J}  \frac{8 \xi  c_1^2 \|\hat C_\theta - C_\theta\|_\mcL^2}{N \alpha_j^{2}} \right]^{\frac{1}{2}}
	\end{align*}
	as claimed.
\end{proof}

\begin{lemma} \label{l:sum}
	Let $\alpha > 0$, then as $J \to \infty$
	\[
	\sum_{j = 1}^J j^{\alpha} \approx \frac{J^{\alpha + 1}}{\alpha+1}
	\quad \text{ i.e. } \quad 
	\frac{\sum_{j = 1}^J j^{\alpha}}{J^{\alpha + 1}/(\alpha+1)  } \to 1.
	\]
\end{lemma}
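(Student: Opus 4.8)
The plan is to prove this classical asymptotic by sandwiching the sum between two integrals of the increasing function $x \mapsto x^\alpha$ and then applying the squeeze theorem. Since $\alpha > 0$, the map $x \mapsto x^\alpha$ is nondecreasing on $[0,\infty)$, so for each integer $j \geq 1$ one has the elementary pair of bounds $\int_{j-1}^{j} x^\alpha \, dx \leq j^\alpha \leq \int_{j}^{j+1} x^\alpha \, dx$, simply because $x^\alpha \leq j^\alpha$ on $[j-1,j]$ and $j^\alpha \leq x^\alpha$ on $[j,j+1]$.

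First I would sum the lower bound over $j = 1, \dots, J$, which telescopes into a single integral $\int_0^J x^\alpha \, dx = J^{\alpha+1}/(\alpha+1)$, and likewise sum the upper bound into $\int_1^{J+1} x^\alpha \, dx = [(J+1)^{\alpha+1} - 1]/(\alpha+1)$. Both evaluations use $\alpha + 1 > 0$, and the lower integral starts cleanly at $0$ since $x^\alpha$ is continuous there. This yields the two-sided bound
\[
\frac{J^{\alpha+1}}{\alpha+1} \;\leq\; \sum_{j=1}^{J} j^\alpha \;\leq\; \frac{(J+1)^{\alpha+1} - 1}{\alpha+1}.
\]

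Next I would divide through by $J^{\alpha+1}/(\alpha+1)$, obtaining
\[
1 \;\leq\; \frac{\sum_{j=1}^{J} j^\alpha}{J^{\alpha+1}/(\alpha+1)} \;\leq\; \left(1 + \tfrac{1}{J}\right)^{\alpha+1} - \frac{1}{J^{\alpha+1}}.
\]
The left-hand bound is already exactly $1$; for the right-hand bound, as $J \to \infty$ we have $(1 + 1/J)^{\alpha+1} \to 1$ by continuity of $t \mapsto t^{\alpha+1}$ and $1/J^{\alpha+1} \to 0$ since $\alpha + 1 > 0$, so the upper bound converges to $1$ as well. The squeeze theorem then gives that the ratio tends to $1$, which is the claim.

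There is essentially no substantive obstacle here; the result is elementary, and the only points demanding a little care are the finiteness and evaluation of $\int_0^J x^\alpha \, dx$ (which rely on $\alpha > 0$) and the verification that the upper bound converges to $1$. If one prefers to avoid integrals entirely, an equivalent route is the Stolz--Cesàro theorem applied to $a_J = \sum_{j=1}^{J} j^\alpha$ and $b_J = J^{\alpha+1}$: one computes $\lim_{J} (a_J - a_{J-1})/(b_J - b_{J-1}) = \lim_{J} J^\alpha / \bigl[J^{\alpha+1} - (J-1)^{\alpha+1}\bigr] = 1/(\alpha+1)$ after a first-order binomial expansion of $(J-1)^{\alpha+1} = J^{\alpha+1}(1 - 1/J)^{\alpha+1}$, yielding the same conclusion.
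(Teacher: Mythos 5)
Your proof is correct. The paper argues differently: it rewrites $\sum_{j=1}^J j^{\alpha} = J^{\alpha+1}\sum_{j=1}^J \frac{1}{J}\left(\frac{j}{J}\right)^{\alpha}$ and identifies the inner sum as a Riemann sum converging to $\int_0^1 x^{\alpha}\,dx = \frac{1}{\alpha+1}$, leaving the error of the Riemann approximation implicit behind an ``$\approx$''. Your integral-comparison argument reaches the same conclusion by exploiting the monotonicity of $x \mapsto x^{\alpha}$ to sandwich the sum between $\int_0^J x^{\alpha}\,dx$ and $\int_1^{J+1} x^{\alpha}\,dx$, and then squeezing; this makes the error control fully explicit and even yields the quantitative two-sided bound
\[
1 \;\leq\; \frac{\sum_{j=1}^{J} j^{\alpha}}{J^{\alpha+1}/(\alpha+1)} \;\leq\; \left(1 + \tfrac{1}{J}\right)^{\alpha+1} - \frac{1}{J^{\alpha+1}},
\]
which is slightly more than the lemma asks for and could be reused if one wanted explicit constants in Theorem 4. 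Your Stolz--Ces\`aro alternative is also valid (the sequence $b_J = J^{\alpha+1}$ is strictly increasing and unbounded since $\alpha+1>0$), and is the most elementary of the three routes in that it avoids integrals altogether. All three approaches are essentially interchangeable here; yours buys a little extra rigor at no real cost.
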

\begin{proof}
	We can rewrite
	\[
	\sum_{j = 1}^J j^{\alpha} = J^{\alpha + 1} \sum_{j = 1}^J \frac{1}{J } \left(\frac{j}{J}\right)^{\alpha}. 
	\]
	Using the definition of the Riemann integral we have that
	\[
	J^{\alpha + 1} \sum_{j = 1}^J \frac{1}{J } \left(\frac{j}{J}\right)^{\alpha} \approx J^{\alpha + 1} \int_{0}^1 x^{\alpha} \ dx = 
	\frac{J^{\alpha + 1}}{\alpha + 1},
	\]
	which is the desired result.
\end{proof}
\begin{proof}[Proof of Theorem \ref{thm:ConvergenceOverall}]
	We begin by analyzing the sum of the $\alpha_j^{-2}$.  Applying Assumption \ref{a:rate1} and Lemma \ref{l:sum} we have that
	\begin{align*}
	\sum_{j=1}^J \alpha_j^{-2}  & \leq K  \sum_{j=1}^J  j^{2\delta+2} 
	\approx \frac{K}{2\delta+3} J^{2\delta+3}.
	\end{align*}
	This implies that \eqref{e:order1} is bounded by 
	\begin{align}
	\label{e:rate1.1}
	\mbE \rho(\hat E_{\hat \theta}, E_{\hat \theta})^2 
	\leq \sum_{j=1}^J \frac{8 \xi c_1^2 \mbE \| \hat C_\theta - C_\theta \|_{\mcL}^2}{N \alpha_j^2}
	\leq   J^{{2 \delta + 3}} N^{-2} O(1).
	\end{align} 
	The second distance can be bounded using the simple scalar relationship $(a + b)^2 \leq 2a^2 + 2 b^2$, which gives
	\[
	\rho(E_{\hat \theta}, \hat E_{\hat \theta})^2
	\leq 2 c_J^2 N^{-1} \xi +  \sum_{j=1}^J \frac{16 \xi c_1^2  \| \hat C_\theta - C_\theta\|_{\mcL}^2}{N \alpha_j^2}.
	\]
	Using Assumption \ref{a:rate1} we then have that
	\[
	\mbE[\rho(E_{\hat \theta}, \hat E_{\hat \theta})^2]
	= J^{-2\gamma} N^{-1} O(1)  + J^{{2 \delta + 3}} N^{-2} O(1).
	\]
	Setting the two errors equal to each other yields
	\[
	J^{-2\gamma} N^{-1} = J^{{2 \delta + 3}} N^{-2}
	\Longrightarrow J  = N^{\frac{1}{2\delta  + 3 + 2 \gamma}}.
	\]
	This yields an overall error of 
	\[
	\mbE d_H(E_{\hat \theta}, \hat E_{\hat \theta})^2
	\leq N^{- \left(2  -  \frac{2\delta  + 3 }{2\delta + 3 + 2\gamma} \right)} O(1) .
	\]
	as claimed.
\end{proof}

\begin{proof}[Proof of Corollary \ref{corol:EchXConvRate1}]
	Recall that $c_j^2 = \hat \lambda_j^{1/2}$.  Denote $\tilde c_j^2 =\lambda_j^{1/2}$ and the resulting $J$ dimensional confidence region as $\tilde E_{\hat \theta}$.  Here $\tilde E_{\hat \theta}$ acts an intermediate step between $\hat E_{\hat \theta}$ and $E_{\hat \theta}$. 
	When using $\tilde E_{\hat \theta}$ we have that $ 2 \gamma = \delta /2 $.  The rate in Theorem \ref{thm:ConvergenceOverall} then has an exponent of
	$$
	- \left( 2 - \frac{2\delta + 3}{2\delta + 3 + \delta/2}\right)
	= - \frac{6\delta + 6}{5 \delta + 6},
	$$
	which means that
	\[
	d_H(\tilde E_{\hat \theta}, E_{\hat \theta})^2 = O_P( N^{ - \frac{6\delta + 6}{5 \delta + 6}}).
	\]
	We will now show that the distance $ d_H(\hat E_{\hat \theta}, \tilde E_{\hat \theta})^2$ is of a smaller order, which implies that $d_H(\hat E_{\hat \theta}, E_{\hat \theta})^2$ has the same rate as $ d_H(\tilde E_{\hat \theta},  E_{\hat \theta})^2$, as desired.
	
	Recall that if $x \in \hat E_{\hat \theta}$ then it satifies
	\[
	\sum_{j=1}^J \frac{\langle x - \hat \theta, \hat v_j \rangle^2 }{N^{-1} c_j^2 } < \xi.
	\]
	We can create a scaled $x$, call it $\tilde x$ such that it is also in $\tilde E_{\hat \theta}$ by noticing that
	\[
	\sum_{j=1}^J \frac{\langle x - \hat \theta, \hat v_j \rangle^2 }{N^{-1} \tilde c_j^2 }
	= \sum_{j=1}^J \frac{\langle x - \hat \theta, \hat v_j \rangle^2 }{N^{-1} c_j^2 } \frac{c_j^2}{\tilde c_j^2}
	\leq \max_{j=1,\dots,  J} \frac{c_j^2}{\tilde c_j^2} \sum_{j=1}^J \frac{\langle x - \hat \theta, \hat v_j \rangle^2 }{N^{-1} c_j^2 } 
	\leq  \xi \max_{j=1,\dots,  J} \frac{c_j^2}{\tilde c_j^2}.
	\]
	So $\tilde x \in \tilde E_{\hat \theta}$ if $ \tilde x = \hat \theta +  \left(\max_{j=1,\dots,  J} c_j^2 / \tilde c_j^2 \right)^{-\frac{1}{2}} (x - \hat \theta)$.  Now the difference between $x$ and $\tilde x$ is given by, using the simple scalar relationship $(1-\sqrt{a})^2 \leq (1-a)^2$ for $a > 0$, 
	\begin{align*}
	\|x - \tilde x\|^2 = \sum_{j=1}^J \langle x - \tilde x, \hat v_j \rangle^2 & =  \left(1 - \left(\max_{j=1,\dots,  J} c_j^2 / \tilde c_j^2 \right)^{-\frac{1}{2}}\right)^2
	\sum_{j=1}^J \langle x - \hat \theta, \hat v_j \rangle^2  \\
	& \leq  \left(1 - \left(\max_{j=1,\dots,  J} c_j^2 / \tilde c_j^2 \right)^{-1}\right)^2
	\sum_{j=1}^J \langle x - \hat \theta, \hat v_j \rangle^2  \\
	& \leq \xi N^{-1} c_1^2  \left(1 - \left(\max_{j=1,\dots,  J} c_j^2 / \tilde c_j^2 \right)^{-1}\right)^2 \\
	& \leq \xi N^{-1} c_1^2
	\max_{j=1,\dots,  J} \frac{(c_j^2 - \tilde c_j^2)^2}{c_j^4} \\
	& \leq \frac{\xi N^{-1} c_1^2}{c_J^{4}}
	\max_{j=1,\dots,  J} (c_j^2 - \tilde c_j^2)^2. 
	\end{align*}
	Using a Taylor expansion and Lemma \ref{lemma:DiffinNormExpansion} one has that $ \max_{j=1,\dots,  J} (c_j^2 - \tilde c_j^2)^2 = O_P(N^{-1})$ and $c_J^{-4} = O_P(\lambda_J^{-1})$.   We therefore have that
	\[
	\rho(\hat E_{\hat \theta}, \tilde E_{\hat \theta})^2
	= \frac{1}{N^2 \lambda_J}O_P(1).
	\]
	Plugging in the optimal $J$ we get that
	\[
	\rho(\hat E_{\hat \theta}, \tilde E_{\hat \theta})^2 = 
	J^{\delta}N^{-2}  O_P(1)
	= N^{- \left(2 - \frac{\delta}{2 \delta + 3 + \delta/2} \right)} O_P(1)
	=  N^{- \frac{8 \delta +12 }{5 \delta + 6} } O_P(1)
	= N^{- \frac{6 \delta +6 }{5 \delta + 6} } o_P(1).
	\]
	Nearly identical arguments will yield the same result for the reverse $\rho(\tilde  E_{\hat \theta}, \hat E_{\hat \theta})^2$.  
	Thus $d(\hat  E_{\hat \theta}, \tilde E_{\hat \theta})^2$ is of a lower order than $d(\tilde E_{\hat \theta},  E_{\hat \theta})^2$ and the claim holds.
\end{proof} 

\section{Other Criteria}
\label{section:OtherCriteria}

Both in hyper-ellipsoid and hyper-rectangular regions, there exists infinitely many options to find $\{c_j\}$ that determines their shapes. The following introduces a few more criteria that may be found to be interesting.

\subsection{Hyper-Ellipsoid}

In hyper-ellipsoid, one may take 
\[ c_j^2 = \left( \sum_{i=j}^\infty \lambda_i \right) ^ {1/2}, \]
i.e. the square root of the tail sum of the eigenvalues.  Recall that $\sum_{j=1}^\infty \lambda_j < \infty$ since it is equal to the trace of the covariance operator. It is therefore clear that $c_j \to 0$ and the region is compact.  What is not as obvious is that one also has  $\sum_{j=1}^\infty {\lambda_j}{c^{-2}_j} < \infty$, which means that the resulting $W_\theta$ is a  random variable with finite mean and the resulting region can obtain the proper coverage.  Showing this is actually an interesting real analysis exercise and we refer the reader to \citet[p. 80]{rudin:1976} for more details. We denote this region as $E_{c1}$

\subsection{Hyper-Rectangle}
For hyper-rectangular regions, the first possibility is to use the same $\{c_j\}$ used in hyper-ellipsoid and find $\{z_j\}$ numerically. Since $\{z_j\}$ is uniquely determined by $\xi$ once $\{c_j\}$ is given, one can easily search for $\xi$ that satisfies \eqref{e:RectangleProbRule}. We will denote the region achieved in this way using $c_j^2 = \left( \sum_{i=j}^\infty \lambda_i \right) ^ {1/2}$ of $E_{c1}$ as $R_{c1}$, and the one uses $c_j^2 = \lambda_j^{1/2}$ of $R_{c}$ as $E_{c}$

Next approach we considered is to find $\{z_j\}$ that minimizes $\sup \{ \| h - \hat{\theta} \|^2 : h \in R_{\hat{\theta}} \}$, i.e. the distance between the farthest point of the region from the center. It is equivalent to finding a rectangular region that has smallest $\mcH$ norm. The problem reduces to minimizing $\frac{1}{N} \sum_{j=1}^{\infty} \lambda_j z_j^2$ under the constraint $\sum_{j=1}^{\infty}\log \Phi_{sym}(z_j) = \log(1-\alpha)$. The solution for this problem can be found in the Appendices Section \ref{Appendix:FindZ1}, and the following summarizes the steps to follow :
\begin{enumerate}
	\item Define a function $f(z) := \exp(z^2 / 2) \Phi_{sym}(z)$
	\item Define $f^{-1}(\cdot)$, the inverse of $f(\cdot)$. This can be achieved numerically by univariate optimization in practice.
	\item Find $M^* = \arg\min_{M} 
	\left| \prod_{j} \Phi_{sym}\left\{ f^{-1} \left( \frac{M}{\sqrt{2\pi} \lambda_j}\right) \right\} - (1-\alpha) \right|$. This also can be achieved numerically by univariate optimization.
	\item Take $z_j = f^{-1}\left( \frac{M^*}{\sqrt{2\pi}\lambda_j}\right)$ for each $j$.
\end{enumerate}
We will denote this region as $R_{z1}$

Yet another approach, which is much simpler, is to start from $\sum_{j=1}^{\infty}\log \Phi_{sym}(z_j) = \log(1-\alpha) $ and distribute $\log(1-\alpha)$ among $\Phi_{sym}(z_j)$'s, possibly assigning more weight to early $j$'s to narrow down length along the early eigenfunctions. For example, $\{z_j\}$ that satisfies $\log \Phi_{sym}(z_j) = \frac{\lambda_j^\rho}{\sum_{k=1}^\infty \lambda_k^\rho} \log(1-\alpha)$ is an intuitive option as long as $\sum_j \lambda_j^\rho < \infty$ is satisfied for some positive $\rho$. This then has a closed form solution $z_j = \Phi_{sym}^{-1}\left[ \exp \left( {\frac{\lambda_j^\rho}{\sum_{k=1}^\infty \lambda_k^\rho} \log(1-\alpha)} \right) \right]$ for each $j$. Therefore,  any $\rho \geq 1$ can be used regardless the smoothness of the process, and $\rho = 1$ leads to what we denoted as $R_{z}$ in the main text. We empirically observed that $\{z_j\}$ found this way is similar to $\{z_j\}$ found from $R_{z1}$ above.

\section{Small Sample Version}
\label{Appendix:SmallSample}
When $C_\theta$ is unknown, we relied on the consistency of $\hat{C}_\theta$ and therefore replaced $\{( v_j , \lambda_j) \}_{j=1}^{\infty}$ with $\{( \hat{v}_j , \hat{\lambda}_j) \}_{j=1}^{J}$ to construct empirical versions of confidence regions. Although this approach is still valid, one might want to look at an alternative if $N$ is too small. We suggest here a simple technique to respond to this concern. To utilize an explicit form of $\hat{C}_\theta$, we will turn our attention to a special case of $\hat{\theta}$ in this section only. Note, however, that same idea can be applied in other situations too. 
Consider $\hat{\theta} = \bar{X}$, where $\bar{X} = N^{-1}\sum_{i=1}^N  X_i$, and  $X_i \stackrel{iid}{\sim} \mcN(\theta, C_\theta)$. The standard covariance estimator $\hat{C}_\theta$ is
\begin{align*}
\hat{C}_\theta = \frac{1}{N-1} \sum_{i=1}^{N} (X_i - \bar{X}) \otimes  (X_i - \bar{X}).
\end{align*}
By KL expansion \eqref{e:KL}, we can write $X_i = \theta + \sum_{j=1}^\infty \sqrt{\lambda_j}Z_{ij}v_j$ where $Z_{ij} \stackrel{iid}{\sim}\mcN(0,1)$. Denoting $\bar{Z}_{\cdot j} := N^{-1}\sum_{i=1}^N Z_{ij}$,
\begin{align*}
\hat{C}_\theta &= \frac{1}{N-1} \sum_{i=1}^N \left( \sum_{j=1}^\infty \sqrt{\lambda_j}(Z_{ij} -\bar{Z}_{\cdot j}) v_j \right) \otimes
\left( \sum_{k=1}^\infty \sqrt{\lambda_{k}}(Z_{ik} -\bar{Z}_{\cdot k}) v_{k} \right) \\
&= \sum_{j=1}^\infty \sum_{k=1}^\infty \sqrt{\lambda_j \lambda_{k}} \frac{1}{N-1} \sum_{i=1}^N (Z_{ij} -\bar{Z}_{\cdot j})(Z_{ik} -\bar{Z}_{\cdot k}) \left( v_j \otimes v_{k} \right) 
\end{align*}
gives basis expansion of $\hat{C}_\theta$ using $\{(v_j \otimes v_k)\}_{j,k}$, an orthonormal basis of $\mcS$, other than the common expansion of $\hat{C}_\theta = \sum_{j=1}^{N-1} \hat{\lambda}_j (\hat{v}_j \otimes \hat{v}_j)$.

Therefore, the coefficient of $\hat{C}_\theta$ with respect to $(v_j \otimes v_j)$ becomes
\begin{align}
\label{e:tildeLambda}
\tilde{\lambda}_j := \langle \hat{C}_\theta, (v_{j} \otimes v_{j}) \rangle_{\mcS} = \frac{\lambda_j}{N-1} \sum_{i=1}^{N} (Z_{ij} - \bar{Z}_{\cdot j})^2
\end{align}
Observer that $V_j := \sum_{i=1}^{N} (Z_{ij} - \bar{Z}_{\cdot j})^2 \stackrel{d}{=} \chi^2_{N-1}$ and is independent from $Z_j := \sqrt{N }\bar{Z}_{\cdot j} \stackrel{d}{=} \mcN(0,1)$. Therefore, $T_j := Z_j / \sqrt{V_j/(N-1)}$ follows $t$ distribution with $N-1$ degree of freedom, which are also mutually independent among  $j$'s. Finally, using $\sqrt{\lambda_j} = \sqrt{\tilde{\lambda}_j}/\sqrt{V_j / (N-1)}$ in \eqref{e:tildeLambda}, we achieve expansion 
\begin{align}
\begin{split}
\label{e:ExpansionT}
\sqrt{N} (\hat{\theta} - \theta) 
= \sum_j \sqrt{\lambda_j}Z_j v_j 
= \sum_j \sqrt{\tilde{\lambda}_j} T_j v_j.
\end{split}
\end{align}

Expansion \eqref{e:ExpansionT} implies that we can work with exact distribution  with the knowledge of eigenfunctions, not both eigenfunctions and eigenvalues. In practice where eigenfunctions are not known, we will still replace  $\{(\tilde{\lambda}_j, v_j)\}$ with $\{(\hat{\lambda}_j, \hat{v}_j)\}$. Note, however, that this approach now depend only on the consistency of $\{\hat{v}_j\}$, not those of both $\{\hat{v}_j\}$ and $\{\hat{\lambda}_j\}$. Simulation study in Section \ref{section:Simulation} confirms that this approach is actually appealing.

The rest of this section discusses how to obtain confidence regions utilizing expansion $\eqref{e:ExpansionT}$. All the following implementations need replacement of $\{\lambda_j\}$ with $\{\tilde{\lambda}_j\}$ when $\{v_j\}$ is known, although $\{\hat{\lambda}_j\}$ will be used for  both $\{\lambda_j\}$ and $\{\tilde{\lambda}_j\}$ when $\{v_j\}$ or covariance operator is unknown.

\paragraph{Hyper-Rectangular Regions} For hyper-rectangular regions, one can consider (at least) two options. First option is to maintain the same ratio of radii. One replace $z_j$ with $t_j := c z_j$, where $c > 1$ can be numerically found to satisfy $\prod_{j} \mbP(|T_j| \leq t_j) = 1-\alpha$. Another option is to take $\{t_j\}$ such that $\mathbb{P} (|Z_j| \leq z_j) = \mathbb{P} (|T_j| \leq t_j)$ for each $j$. Favoring the simplicity of implementation, we used the second option in our simulation study, and was denoted with suffix `$s$', for example, as $R_{zs}$. Note, however, that the first option is not costly either.

\paragraph{Hyper-Ellipsoid Regions} For a hyper ellipsoid region, we observe that  
\begin{align*}
W_{\theta} = \sum_j \frac{\lambda_j Z_j^2}{c_j^2}
= \sum_j \frac{\tilde{\lambda}_j }{c_j^2} Z_j^2 \frac{\lambda_j}{\tilde{\lambda_j}}
= \sum_j \frac{\tilde{\lambda}_j }{c_j^2} \frac{Z_j^2}{V_j / (N-1)}. 
\end{align*}
By defining $F_j := \frac{Z_j^2}{V_j / (N-1)}$, $W_\theta = \sum_j \frac{\tilde{\lambda}_j }{c_j^2} F_j$ follows weighted sum of independent $\mathcal{F}_{1,N-1}$ distributions, with weights being $\{\tilde{\lambda}_j c_j^{-2}\}$. This region was not implemented  since no (numerical) tool for the distribution of sum of weighted $\mathcal{F}$ seemed to be available yet.

\section{Finding $\{z_j\}$ for $R_{z1}$}
\label{Appendix:FindZ1}

Under the constraint of $\sum_{j=1}^{\infty}\log \Phi_{sym}(z_j) = \log(1-\alpha)$, we want to find 
\begin{align*}
\left\{z_{j}\right\} 
&= \arg \min_{\left\{z_{j}\right\}} \sup \{ \| h - \hat{\theta} \|^2 : h \in S_{\hat{\theta}} \}  
= \arg \min_{\left\{z_{j}\right\}} \frac{1}{N} \sum_{j=1}^{\infty} \lambda_j z_j^2   
= \arg \min_{\left\{z_{j}\right\}} \sum_{j=1}^{\infty} \lambda_j z_j^2. 
\end{align*}
Take 
$
L(M, z_1, \cdots) :=  \sum_{j} \lambda_j z_j^2 - M \left( \sum_{j}  \log \Phi_{sym}(z_j) - \log (1-\alpha) \right)
$
where $M$ is Lagrange multiplier. To minimize $L$, set
\begin{align*}
\frac{\partial}{\partial{z_j}} L &= 2\lambda_j z_j - 
M \{\Phi_{sym}(z_j)\}^{-1} \frac{\partial}{\partial{z_j}}\int_{-z_j}^{z_j}(2\pi)^{-1/2}e^{-x^2/2}dx   \\
&= 2\lambda_j z_j - M \{\Phi_{sym}(z_j)\}^{-1} \sqrt{\frac{2}{\pi}}e^{-z_j^2/2} = 0.
\end{align*}
We then achieve 
$
M = \sqrt{2\pi} \lambda_j e^{z_j^2/2}z_j \Phi_{sym}(z_j)
$
for each $j$. By defining a function $f(z) := e^{z^2/2}z\Phi_{sym}(z)$ which is increasing in $z > 0$, 
we can write $z_j$ as 
$z_j = f^{-1}\left(M / \sqrt{2\pi}\lambda_j \right)$. 
Therefore, each $z_j$ is can be uniquely determined once $M$ is found. We can  numerically search for such $M$ that satisfies 
\begin{align}
-\frac{\partial}{\partial{M}} L &= \sum_{j}  \log \Phi_{sym}(z_j) - \log (1-\alpha)= 0
\end{align}
This is not computationally difficult nor expensive since $\frac{\partial}{\partial{M}} L$ is monotone in $M$ and we can easily find upper and lower bounds for $M$ from those for $z_1$. For the range of $z_1$, we require $z_1 \geq \Phi_{sym}^{-1}(1-\alpha) \equiv z_{1,lb}$ where  $\Phi_{sym}^{-1}$ is inverse of $\Phi_{sym}$. We also want $z_1$ to be smaller than at least average of $z_i$'s, so we can take $z_{1,ub}$, the upper bound of $z_1$, to satisfy $\prod_{j=1}^p \Phi_{sym}(z_{1,ub}) = (\Phi_{sym}(z_{1,ub}))^p = 1-\alpha$,  i.e., $z_{1,ub} = \Phi_{sym}^{-1}\left\{(1-\alpha)^{1/p}\right\}$. This gives a practical range for $M$ as
$
\sqrt{2\pi} \lambda_1 f(z_{1,lb})  \leq M \leq \sqrt{2\pi} \lambda_1 f(z_{1,ub}) 
$.

\paragraph{Finding p-value} Observe that $M = \sqrt{2\pi} \lambda_j f(z_j)$ hold for all $j$. We can then take $M^\star :=  \sup_j \{\sqrt{2\pi} \lambda_j f(z_j)\}$ as our statistic to get p-value from, namely $1 - \prod_{j=1}^p \Phi_{sym}(z_j^\star)$ where $z_j^\star := f^{-1}\left(M^\star / \sqrt{2\pi}\lambda_j \right)$.

\section{Calculating $P$-values for HT from Confidence Regions}
\label{supp:PvalueRectangle}

This section describes steps to find $p$-values from the hypothesis testing baseds on proposed hyper-ellipsoid and hyper-rectangular regions. Note that $p$-value can be interpreted as the smallest $\alpha$ that makes the confidence region to cover $\theta_0$. For estimated regions, we replace $\{(v_j, \lambda_j)\}$ or $\{(v_j, \tilde{\lambda}_j)\}$ with $\{(\hat{v}_j, \hat{\lambda}_j)\}$.

\subsection{Hyper-ellipsoid Regions}
We find the observed test statistic $W^* := \sum_{j}  N \langle \hat{\theta} - \theta_0 , v_j \rangle^2 / c_j^2$ and get $p$-value as $\mbP \left(W_\theta \geq W^* \right)$ where $W_\theta$ is a weighted sum of $\chi^2$ random variables with weights $\{\lambda_j/c_j^2\}$.

\subsection{Hyper-rectangular Regions}

For hyper-rectangular regions, we first need to find z-score for each $j$ as $z_j^* = \langle \sqrt{N}(\hat{\theta} - \theta_0), v_j \rangle / \sqrt{\lambda_j}$. The next step differs by each criterion.

\begin{enumerate}
	\item When $\{c_j\}$ was determined first ($R_c$, $R_{c1}$) : We find $\sqrt{\xi}^\star = \sup_j z^*_j \sqrt{\lambda_j}c_j^{-1}$, which serves as the test statistic, and get $p$-value as $1 - \prod_j \Phi_{sym}(z_j^\star)$ where $z_j^{\star} := \frac{c_j}{\sqrt{\lambda_j} } \sqrt{\xi}^{\star}$.
	\item $R_{z}$ : We utilize $1- \alpha = \exp\left( \frac{\sum_k \lambda_k}{\lambda_j} \log \Phi_{sym}(z_j)\right)$ for each $j$ and therefore use \\ $\inf_j \left[ 1 - \exp\left( \frac{\sum_k \lambda_k}{\lambda_j} \log \Phi_{sym}(z^*_j)\right) \right]$ as our p-value. 
	\item $R_{z1}$ : We find $M^\star = \sup_j \sqrt{2\pi}\lambda_jf(z_j^*)$, which serves as the test statistic, and get $p$-value as $1 - \prod_j \Phi_{sym}(z_j^\star)$ where $z_j^\star = f^{-1}(\frac{M^\star}{\sqrt{2\pi}\lambda_j})$.	
\end{enumerate}

For the small sample versions of hyper-rectangular regions, we find $t$-score for each $j$ as $t_j^* := \langle \sqrt{N}(\hat{\theta} - \theta_0), v_j \rangle / \sqrt{\tilde{\lambda}_j}$ and convert them into $z$-scores such that $\Phi_{sym}(z_j^*)=\mbP(|T_j| \leq t_j^*)$ for each $j$. We can then follow the same step for each region as described above.

\section{Smoothing by penalty on $2^{nd}$ derivative and it's covariance} 

Let $\{X_i\}_{i=1}^{N} \subset \mcH = L^2[0,1]$ be an independent sample with mean $\mu$.  Using a basis expansion we can write $X_i \approx \sum_{j=1}^J x_{ij}e_j$ and $\mu = \sum_{j=1}^J m_{j}e_j$ where $\{e_j\}_{j=1}^J$ is a basis of $\mcH$, and satisfy $\|e_j\| = 1$ and the second derivative $e_j^{(2)}$ exists for all $j$.  We assume that $J$ is large so that the degree of smoothing on $\mu$ is controlled primarily through the penalty.  A standard smoothed estimate for $\mu$ can be found by penalizing the second derivative
\begin{align*}
\hat \mu = \arg \min_\mu  N^{-1} \sum_{i=1}^N \| X_i - \mu \|^2 + \lambda \| \mu^{(2)} \| \end{align*}
where $\lambda$ is the smoothing parameter. Note 
\begin{align*}
\| X_i - \mu \|^2 &= \left\| \sum_{j} (x_{ij} - m_j)e_j \right \|^2 =
\left\langle \sum_{j} (x_{ij} - m_j)e_j , \sum_{j'} (x_{ij'} - m_{j'})e_{j'} \right\rangle  \\
&= \sum_j \sum_{j'} ( x_{ij}x_{ij'}  - x_{ij}m_{j'} - x_{ij'}m_j + m_jm_{j'}) \langle e_j, e_{j'} \rangle 
= \bx_i'B\bx_i  - 2 \bx_i' B \bm + \bm' B \bm
\end{align*}  
where $\bx_i = (x_{i1}, \dots , x_{iJ})'$, \ $\bm = (m_{i}, \dots , m_{J})'$, and $B_{jj'} = \langle e_j, e_{j'} \rangle $.
Likewise, 
\begin{align*}
\| \mu^{(2)} \|^2 &= \left\| \sum_{j} m_j e_j^{(2)} \right \|^2 = 
\left\langle \sum_{j} m_j e_j^{(2)}, \sum_{j'} m_{j'} e_{j'}^{(2)} \right\rangle 
= \sum_j \sum_{j'} m_jm_{j'} \langle e_j^{(2)} , e_{j'}^{(2)} \rangle 
= \bm' D \bm
\end{align*}  
where $D_{jj'} = \langle e_j^{(2)}, e_{j'}^{(2)} \rangle $.
Therefore, the target function can be expressed as
\[
N^{-1} \sum_{i=1}^N \left( \bx_i'B\bx_i  - 2 \bx_i' B \bm \right) + \bm' B \bm + \lambda  \bm' D \bm
\]
using familiar matrix notations. By taking $\partial Q / \partial \bm = 0$ and defining $\bar{\bx} := N^{-1}\sum_{i=1}^N \bx_i$, we get
\[
\hat{\bm} = (B + \lambda D)^{-1} B \bar{\bx}
\]
which then gives estimate of $\mu$ as $\hat{\mu} = \sum_{j=1}^J \hat{m}_j e_j$. By defining $A := (B + \lambda D)^{-1} B$, we get the covariance of $\hat{\bm}$ as $\Sigma_{\hat{\bm}} := A \Sigma_{\bar{\bx}} A'$ where $\Sigma_{\bar{\bx}} := \Cov(\bar{\bx})$. Finally, the covariance operator of $\hat{\mu}$, $C_{\hat{\mu}}$, 
takes a bivariate function form of 
$
C_{\hat{\mu}}(t,s) = \sum_j \sum_{j'} \Cov(\hat{m}_j,\hat{m}_{j'})e_j(t)e_{j'}(s)
$, or 
$
C_{\hat{\mu}} = \sum_j \sum_{j'} \Cov(\hat{m}_j,\hat{m}_{j'})(e_j \otimes e_{j'})
$
as an operator. 

\section{Power Tables}

The following four tables come from Subsection 4.1.2 

\label{supplement:PowerTable}
\begin{table}[ht]
	\small
	\caption{$N=25, \ \nu=1/2$}
	\centering
	\begin{tabular}{l|rrrrrrrrrr|r}
		\hline
		\textbf{1. Shift} & 0.02 & 0.04 & 0.06 & 0.08 & 0.1 & 0.12 & 0.14 & 0.16 & 0.18 & 0.2 & Avg. \\ 
		\hline
		$\hat{E}_{norm}$ & 0.09 & 0.16 & 0.29 & 0.47 & 0.63 & 0.78 & 0.88 & 0.96 & 0.98 & 1.00 & 0.62 \\ 
		$\hat{E}_{PC(3)}$ & 0.09 & 0.14 & 0.24 & 0.37 & 0.52 & 0.67 & 0.81 & 0.91 & 0.95 & 0.99 & 0.57 \\ 
		$\hat{E}_{c1}$ & 0.09 & 0.16 & 0.30 & 0.47 & 0.64 & 0.79 & 0.89 & 0.96 & 0.98 & 1.00 & 0.63 \\ 
		$\hat{E}_{c}$ & 0.10 & 0.17 & 0.31 & 0.48 & 0.64 & 0.79 & 0.89 & 0.96 & 0.98 & 1.00 & 0.63 \\ 
		$\hat{R}_{z1}$ & 0.09 & 0.15 & 0.28 & 0.44 & 0.60 & 0.75 & 0.86 & 0.94 & 0.98 & 0.99 & 0.61 \\ 
		$\hat{R}_{z}$ & 0.09 & 0.15 & 0.28 & 0.44 & 0.60 & 0.75 & 0.87 & 0.95 & 0.98 & 0.99 & 0.61 \\ 
		$\hat{R}_{z1s}$ & 0.06 & 0.11 & 0.23 & 0.38 & 0.54 & 0.70 & 0.83 & 0.93 & 0.97 & 0.99 & 0.57 \\ 
		$\hat{R}_{zs}$ & 0.06 & 0.11 & 0.23 & 0.38 & 0.55 & 0.71 & 0.83 & 0.93 & 0.97 & 0.99 & 0.58 \\ 
		$\hat{B}_{s}$ & 0.12 & 0.20 & 0.34 & 0.51 & 0.66 & 0.80 & 0.90 & 0.96 & 0.98 & 1.00 & 0.65 \\ 
		\hline
		\textbf{2. Scale} & 0.02 & 0.04 & 0.06 & 0.08 & 0.1 & 0.12 & 0.14 & 0.16 & 0.18 & 0.2 & Avg. \\ 
		\hline
		$\hat{E}_{norm}$ & 0.06 & 0.09 & 0.13 & 0.20 & 0.30 & 0.41 & 0.54 & 0.68 & 0.79 & 0.88 & 0.41 \\ 
		$\hat{E}_{PC(3)}$ & 0.08 & 0.13 & 0.22 & 0.35 & 0.51 & 0.65 & 0.80 & 0.89 & 0.95 & 0.98 & 0.56 \\ 
		$\hat{E}_{c1}$ & 0.07 & 0.11 & 0.18 & 0.28 & 0.42 & 0.56 & 0.71 & 0.84 & 0.92 & 0.96 & 0.51 \\ 
		$\hat{E}_{c}$ & 0.08 & 0.13 & 0.20 & 0.32 & 0.47 & 0.61 & 0.76 & 0.87 & 0.94 & 0.97 & 0.53 \\ 
		$\hat{R}_{z1}$ & 0.08 & 0.12 & 0.19 & 0.29 & 0.41 & 0.56 & 0.71 & 0.83 & 0.91 & 0.95 & 0.50 \\ 
		$\hat{R}_{z}$ & 0.08 & 0.11 & 0.19 & 0.28 & 0.41 & 0.56 & 0.70 & 0.82 & 0.91 & 0.95 & 0.50 \\ 
		$\hat{R}_{z1s}$ & 0.05 & 0.08 & 0.12 & 0.20 & 0.32 & 0.46 & 0.61 & 0.76 & 0.86 & 0.92 & 0.44 \\ 
		$\hat{R}_{zs}$ & 0.05 & 0.07 & 0.12 & 0.20 & 0.32 & 0.46 & 0.61 & 0.75 & 0.85 & 0.92 & 0.44 \\ 
		$\hat{B}_{s}$ & 0.10 & 0.14 & 0.22 & 0.33 & 0.47 & 0.60 & 0.74 & 0.85 & 0.92 & 0.96 & 0.53 \\ 
		\hline
		\textbf{3. Local Shift} & 0.02 & 0.04 & 0.06 & 0.08 & 0.1 & 0.12 & 0.14 & 0.16 & 0.18 & 0.2 & Avg. \\ 
		\hline
		$\hat{E}_{norm}$ & 0.06 & 0.07 & 0.08 & 0.10 & 0.14 & 0.19 & 0.26 & 0.36 & 0.49 & 0.62 & 0.23 \\ 
		$\hat{E}_{PC(3)}$ & 0.08 & 0.12 & 0.20 & 0.33 & 0.48 & 0.65 & 0.78 & 0.87 & 0.93 & 0.96 & 0.54 \\ 
		$\hat{E}_{c1}$ & 0.07 & 0.10 & 0.18 & 0.35 & 0.63 & 0.89 & 0.98 & 1.00 & 1.00 & 1.00 & 0.62 \\ 
		$\hat{E}_{c}$ & 0.09 & 0.14 & 0.31 & 0.61 & 0.88 & 0.99 & 1.00 & 1.00 & 1.00 & 1.00 & 0.70 \\ 
		$\hat{R}_{z1}$ & 0.09 & 0.17 & 0.40 & 0.69 & 0.91 & 0.99 & 1.00 & 1.00 & 1.00 & 1.00 & 0.72 \\ 
		$\hat{R}_{z}$ & 0.09 & 0.17 & 0.39 & 0.68 & 0.91 & 0.98 & 1.00 & 1.00 & 1.00 & 1.00 & 0.72 \\ 
		$\hat{R}_{z1s}$ & 0.04 & 0.07 & 0.19 & 0.42 & 0.72 & 0.92 & 0.99 & 1.00 & 1.00 & 1.00 & 0.64 \\ 
		$\hat{R}_{zs}$ & 0.04 & 0.07 & 0.18 & 0.41 & 0.71 & 0.91 & 0.98 & 1.00 & 1.00 & 1.00 & 0.63 \\ 
		$\hat{B}_{s}$ & 0.10 & 0.13 & 0.21 & 0.31 & 0.45 & 0.61 & 0.75 & 0.86 & 0.94 & 0.97 & 0.53 \\ 
		\hline
	\end{tabular}
\end{table}

\begin{table}[ht]
	\caption{$N=25, \ \nu=3/2$}
	\centering
	\begin{tabular}{l|rrrrrrrrrr|r}
		\hline
		\textbf{1. Shift} & 0.02 & 0.04 & 0.06 & 0.08 & 0.1 & 0.12 & 0.14 & 0.16 & 0.18 & 0.2 & Avg. \\ 
		\hline
		$\hat{E}_{norm}$ & 0.08 & 0.15 & 0.27 & 0.41 & 0.57 & 0.73 & 0.84 & 0.92 & 0.97 & 0.99 & 0.59 \\ 
		$\hat{E}_{PC(3)}$ & 0.10 & 0.16 & 0.25 & 0.36 & 0.51 & 0.67 & 0.80 & 0.89 & 0.95 & 0.98 & 0.57 \\ 
		$\hat{E}_{c1}$ & 0.09 & 0.16 & 0.27 & 0.42 & 0.58 & 0.74 & 0.85 & 0.93 & 0.97 & 0.99 & 0.60 \\ 
		$\hat{E}_{c}$ & 0.09 & 0.16 & 0.28 & 0.42 & 0.58 & 0.74 & 0.85 & 0.93 & 0.97 & 0.99 & 0.60 \\ 
		$\hat{R}_{z1}$ & 0.09 & 0.15 & 0.26 & 0.41 & 0.56 & 0.72 & 0.84 & 0.92 & 0.96 & 0.99 & 0.59 \\ 
		$\hat{R}_{z}$ & 0.09 & 0.15 & 0.26 & 0.41 & 0.56 & 0.72 & 0.84 & 0.92 & 0.97 & 0.99 & 0.59 \\ 
		$\hat{R}_{z1s}$ & 0.06 & 0.12 & 0.22 & 0.35 & 0.50 & 0.67 & 0.81 & 0.89 & 0.95 & 0.98 & 0.56 \\ 
		$\hat{R}_{zs}$ & 0.06 & 0.12 & 0.22 & 0.36 & 0.51 & 0.67 & 0.81 & 0.90 & 0.95 & 0.98 & 0.56 \\ 
		$\hat{B}_{s}$ & 0.09 & 0.17 & 0.28 & 0.43 & 0.59 & 0.75 & 0.86 & 0.93 & 0.97 & 0.99 & 0.61 \\ 
		\hline
		\textbf{2. Scale} & 0.02 & 0.04 & 0.06 & 0.08 & 0.1 & 0.12 & 0.14 & 0.16 & 0.18 & 0.2 & Avg. \\ 
		\hline
		$\hat{E}_{norm}$ & 0.07 & 0.09 & 0.13 & 0.18 & 0.27 & 0.37 & 0.49 & 0.61 & 0.73 & 0.83 & 0.38 \\ 
		$\hat{E}_{PC(3)}$ & 0.11 & 0.18 & 0.29 & 0.43 & 0.60 & 0.75 & 0.87 & 0.94 & 0.98 & 0.99 & 0.61 \\ 
		$\hat{E}_{c1}$ & 0.08 & 0.12 & 0.19 & 0.30 & 0.45 & 0.62 & 0.77 & 0.89 & 0.95 & 0.98 & 0.53 \\ 
		$\hat{E}_{c}$ & 0.08 & 0.12 & 0.20 & 0.31 & 0.46 & 0.62 & 0.77 & 0.89 & 0.95 & 0.98 & 0.54 \\ 
		$\hat{R}_{z1}$ & 0.08 & 0.13 & 0.21 & 0.33 & 0.48 & 0.64 & 0.79 & 0.89 & 0.95 & 0.98 & 0.55 \\ 
		$\hat{R}_{z}$ & 0.08 & 0.13 & 0.21 & 0.32 & 0.48 & 0.64 & 0.78 & 0.89 & 0.95 & 0.98 & 0.55 \\ 
		$\hat{R}_{z1s}$ & 0.06 & 0.09 & 0.15 & 0.24 & 0.39 & 0.54 & 0.70 & 0.83 & 0.92 & 0.97 & 0.49 \\ 
		$\hat{R}_{zs}$ & 0.06 & 0.09 & 0.15 & 0.24 & 0.38 & 0.53 & 0.70 & 0.82 & 0.92 & 0.96 & 0.49 \\ 
		$\hat{B}_{s}$ & 0.08 & 0.12 & 0.20 & 0.30 & 0.43 & 0.59 & 0.72 & 0.83 & 0.91 & 0.96 & 0.51 \\ 
		\hline
		\textbf{3. Local Shift} & 0.02 & 0.04 & 0.06 & 0.08 & 0.1 & 0.12 & 0.14 & 0.16 & 0.18 & 0.2 & Avg. \\ 
		\hline
		$\hat{E}_{norm}$ & 0.07 & 0.06 & 0.08 & 0.09 & 0.11 & 0.15 & 0.19 & 0.24 & 0.33 & 0.43 & 0.17 \\ 
		$\hat{E}_{PC(3)}$ & 0.13 & 0.28 & 0.52 & 0.77 & 0.93 & 0.99 & 1.00 & 1.00 & 1.00 & 1.00 & 0.76 \\ 
		$\hat{E}_{c1}$ & 0.09 & 0.16 & 0.49 & 0.88 & 0.96 & 0.98 & 0.98 & 0.99 & 1.00 & 1.00 & 0.75 \\ 
		$\hat{E}_{c}$ & 0.09 & 0.17 & 0.48 & 0.87 & 0.96 & 0.98 & 0.98 & 0.99 & 1.00 & 1.00 & 0.75 \\ 
		$\hat{R}_{z1}$ & 0.21 & 0.81 & 0.96 & 0.98 & 0.99 & 1.00 & 1.00 & 1.00 & 1.00 & 1.00 & 0.89 \\ 
		$\hat{R}_{z}$ & 0.20 & 0.81 & 0.96 & 0.98 & 0.99 & 1.00 & 1.00 & 1.00 & 1.00 & 1.00 & 0.89 \\ 
		$\hat{R}_{z1s}$ & 0.10 & 0.63 & 0.94 & 0.97 & 0.98 & 0.99 & 1.00 & 1.00 & 1.00 & 1.00 & 0.86 \\ 
		$\hat{R}_{zs}$ & 0.09 & 0.62 & 0.93 & 0.97 & 0.98 & 0.99 & 1.00 & 1.00 & 1.00 & 1.00 & 0.86 \\ 
		$\hat{B}_{s}$ & 0.09 & 0.13 & 0.21 & 0.31 & 0.46 & 0.62 & 0.75 & 0.87 & 0.94 & 0.98 & 0.54 \\ 
		\hline
	\end{tabular}
\end{table}

\begin{table}[ht]
	\caption{$N=100, \ \nu=1/2$}
	\centering
	\begin{tabular}{l|rrrrrrrrrr|r}
		\hline
		\textbf{1. Shift} & 0.01 & 0.02 & 0.03 & 0.04 & 0.05 & 0.06 & 0.07 & 0.08 & 0.09 & 0.1 & Avg. \\ 
		\hline
		$\hat{E}_{norm}$ & 0.08 & 0.15 & 0.28 & 0.46 & 0.63 & 0.79 & 0.90 & 0.96 & 0.98 & 1.00 & 0.62 \\ 
		$\hat{E}_{PC(3)}$ & 0.08 & 0.12 & 0.20 & 0.34 & 0.51 & 0.67 & 0.82 & 0.91 & 0.96 & 0.99 & 0.56 \\ 
		$\hat{E}_{c1}$ & 0.08 & 0.16 & 0.28 & 0.46 & 0.63 & 0.79 & 0.90 & 0.96 & 0.99 & 1.00 & 0.62 \\ 
		$\hat{E}_{c}$ & 0.09 & 0.16 & 0.28 & 0.46 & 0.63 & 0.79 & 0.90 & 0.96 & 0.99 & 1.00 & 0.62 \\ 
		$\hat{R}_{z1}$ & 0.08 & 0.14 & 0.25 & 0.42 & 0.60 & 0.76 & 0.88 & 0.95 & 0.98 & 0.99 & 0.61 \\ 
		$\hat{R}_{z}$ & 0.08 & 0.14 & 0.26 & 0.42 & 0.60 & 0.76 & 0.88 & 0.95 & 0.98 & 0.99 & 0.61 \\ 
		$\hat{R}_{z1s}$ & 0.07 & 0.13 & 0.24 & 0.41 & 0.58 & 0.75 & 0.87 & 0.94 & 0.98 & 0.99 & 0.60 \\ 
		$\hat{R}_{zs}$ & 0.07 & 0.13 & 0.24 & 0.41 & 0.58 & 0.75 & 0.88 & 0.94 & 0.98 & 0.99 & 0.60 \\ 
		$\hat{B}_{s}$ & 0.09 & 0.15 & 0.28 & 0.45 & 0.61 & 0.77 & 0.89 & 0.95 & 0.98 & 0.99 & 0.62 \\ 
		\hline
		\textbf{2. Scale} & 0.01 & 0.02 & 0.03 & 0.04 & 0.05 & 0.06 & 0.07 & 0.08 & 0.09 & 0.1 & Avg. \\ 
		\hline
		$\hat{E}_{norm}$ & 0.06 & 0.08 & 0.13 & 0.19 & 0.29 & 0.41 & 0.55 & 0.69 & 0.81 & 0.89 & 0.41 \\ 
		$\hat{E}_{PC(3)}$ & 0.07 & 0.11 & 0.20 & 0.33 & 0.50 & 0.65 & 0.79 & 0.90 & 0.96 & 0.98 & 0.55 \\ 
		$\hat{E}_{c1}$ & 0.07 & 0.10 & 0.16 & 0.26 & 0.41 & 0.55 & 0.71 & 0.84 & 0.92 & 0.97 & 0.50 \\ 
		$\hat{E}_{c}$ & 0.07 & 0.11 & 0.18 & 0.29 & 0.45 & 0.60 & 0.75 & 0.86 & 0.94 & 0.98 & 0.52 \\ 
		$\hat{R}_{z1}$ & 0.07 & 0.10 & 0.17 & 0.27 & 0.41 & 0.55 & 0.70 & 0.83 & 0.92 & 0.96 & 0.50 \\ 
		$\hat{R}_{z}$ & 0.07 & 0.10 & 0.17 & 0.26 & 0.41 & 0.54 & 0.70 & 0.83 & 0.92 & 0.96 & 0.50 \\ 
		$\hat{R}_{z1s}$ & 0.06 & 0.09 & 0.15 & 0.25 & 0.38 & 0.52 & 0.68 & 0.81 & 0.91 & 0.96 & 0.48 \\ 
		$\hat{R}_{zs}$ & 0.06 & 0.09 & 0.15 & 0.24 & 0.38 & 0.52 & 0.68 & 0.81 & 0.91 & 0.96 & 0.48 \\ 
		$\hat{B}_{s}$ & 0.07 & 0.10 & 0.18 & 0.28 & 0.43 & 0.56 & 0.71 & 0.83 & 0.91 & 0.96 & 0.50 \\ 
		\hline
		\textbf{3. Local Shift} & 0.01 & 0.02 & 0.03 & 0.04 & 0.05 & 0.06 & 0.07 & 0.08 & 0.09 & 0.1 & Avg. \\ 
		\hline
		$\hat{E}_{norm}$ & 0.05 & 0.06 & 0.08 & 0.10 & 0.12 & 0.17 & 0.24 & 0.36 & 0.49 & 0.68 & 0.23 \\ 
		$\hat{E}_{PC(3)}$ & 0.07 & 0.12 & 0.21 & 0.34 & 0.51 & 0.69 & 0.84 & 0.93 & 0.97 & 0.99 & 0.57 \\ 
		$\hat{E}_{c1}$ & 0.06 & 0.09 & 0.19 & 0.39 & 0.76 & 0.96 & 1.00 & 1.00 & 1.00 & 1.00 & 0.64 \\ 
		$\hat{E}_{c}$ & 0.07 & 0.17 & 0.47 & 0.89 & 1.00 & 1.00 & 1.00 & 1.00 & 1.00 & 1.00 & 0.76 \\ 
		$\hat{R}_{z1}$ & 0.07 & 0.20 & 0.56 & 0.90 & 0.99 & 1.00 & 1.00 & 1.00 & 1.00 & 1.00 & 0.77 \\ 
		$\hat{R}_{z}$ & 0.07 & 0.19 & 0.55 & 0.90 & 0.99 & 1.00 & 1.00 & 1.00 & 1.00 & 1.00 & 0.77 \\ 
		$\hat{R}_{z1s}$ & 0.06 & 0.15 & 0.46 & 0.84 & 0.99 & 1.00 & 1.00 & 1.00 & 1.00 & 1.00 & 0.75 \\ 
		$\hat{R}_{zs}$ & 0.06 & 0.14 & 0.45 & 0.83 & 0.99 & 1.00 & 1.00 & 1.00 & 1.00 & 1.00 & 0.75 \\ 
		$\hat{B}_{s}$ & 0.06 & 0.10 & 0.17 & 0.28 & 0.41 & 0.57 & 0.72 & 0.85 & 0.92 & 0.97 & 0.50 \\ 
		\hline
	\end{tabular}
\end{table}

\begin{table}[ht]
	\caption{$N=100, \ \nu=3/2$}
	\centering
	\begin{tabular}{l|rrrrrrrrrr|r}
		\hline
		\textbf{1. Shift} & 0.01 & 0.02 & 0.03 & 0.04 & 0.05 & 0.06 & 0.07 & 0.08 & 0.09 & 0.1 & Avg. \\ 
		\hline
		$\hat{E}_{norm}$ & 0.08 & 0.13 & 0.26 & 0.41 & 0.58 & 0.73 & 0.85 & 0.93 & 0.97 & 0.99 & 0.59 \\ 
		$\hat{E}_{PC(3)}$ & 0.07 & 0.11 & 0.20 & 0.32 & 0.48 & 0.63 & 0.78 & 0.89 & 0.95 & 0.98 & 0.54 \\ 
		$\hat{E}_{c1}$ & 0.08 & 0.14 & 0.26 & 0.40 & 0.58 & 0.73 & 0.85 & 0.93 & 0.97 & 0.99 & 0.59 \\ 
		$\hat{E}_{c}$ & 0.08 & 0.14 & 0.26 & 0.41 & 0.58 & 0.73 & 0.85 & 0.93 & 0.97 & 0.99 & 0.59 \\ 
		$\hat{R}_{z1}$ & 0.07 & 0.13 & 0.25 & 0.39 & 0.56 & 0.71 & 0.84 & 0.92 & 0.97 & 0.99 & 0.58 \\ 
		$\hat{R}_{z}$ & 0.07 & 0.13 & 0.25 & 0.39 & 0.56 & 0.72 & 0.84 & 0.92 & 0.97 & 0.99 & 0.58 \\ 
		$\hat{R}_{z1s}$ & 0.07 & 0.12 & 0.24 & 0.38 & 0.55 & 0.70 & 0.83 & 0.92 & 0.96 & 0.99 & 0.58 \\ 
		$\hat{R}_{zs}$ & 0.07 & 0.13 & 0.24 & 0.38 & 0.55 & 0.70 & 0.83 & 0.92 & 0.96 & 0.99 & 0.58 \\ 
		$\hat{B}_{s}$ & 0.08 & 0.14 & 0.26 & 0.41 & 0.57 & 0.73 & 0.85 & 0.93 & 0.97 & 0.99 & 0.59 \\ 
		\hline
		\textbf{2. Scale}  & 0.01 & 0.02 & 0.03 & 0.04 & 0.05 & 0.06 & 0.07 & 0.08 & 0.09 & 0.1 & Avg. \\ 
		\hline
		$\hat{E}_{norm}$ & 0.06 & 0.08 & 0.12 & 0.18 & 0.25 & 0.35 & 0.47 & 0.61 & 0.73 & 0.84 & 0.37 \\ 
		$\hat{E}_{PC(3)}$ & 0.07 & 0.13 & 0.24 & 0.40 & 0.56 & 0.73 & 0.87 & 0.95 & 0.98 & 0.99 & 0.59 \\ 
		$\hat{E}_{c1}$ & 0.06 & 0.10 & 0.17 & 0.28 & 0.42 & 0.60 & 0.75 & 0.88 & 0.95 & 0.98 & 0.52 \\ 
		$\hat{E}_{c}$ & 0.06 & 0.10 & 0.17 & 0.28 & 0.43 & 0.60 & 0.76 & 0.89 & 0.95 & 0.98 & 0.52 \\ 
		$\hat{R}_{z1}$ & 0.06 & 0.10 & 0.17 & 0.29 & 0.44 & 0.62 & 0.78 & 0.89 & 0.95 & 0.98 & 0.53 \\ 
		$\hat{R}_{z}$ & 0.06 & 0.10 & 0.17 & 0.29 & 0.44 & 0.61 & 0.77 & 0.89 & 0.95 & 0.98 & 0.53 \\ 
		$\hat{R}_{z1s}$ & 0.06 & 0.09 & 0.16 & 0.28 & 0.42 & 0.59 & 0.75 & 0.88 & 0.95 & 0.98 & 0.52 \\ 
		$\hat{R}_{zs}$ & 0.06 & 0.09 & 0.16 & 0.27 & 0.42 & 0.58 & 0.75 & 0.87 & 0.94 & 0.98 & 0.51 \\ 
		$\hat{B}_{s}$ & 0.07 & 0.10 & 0.18 & 0.29 & 0.42 & 0.57 & 0.71 & 0.84 & 0.91 & 0.96 & 0.50 \\ 
		\hline
		\textbf{3. Local Shift} & 0.01 & 0.02 & 0.03 & 0.04 & 0.05 & 0.06 & 0.07 & 0.08 & 0.09 & 0.1 & Avg. \\ 
		\hline
		$\hat{E}_{norm}$ & 0.05 & 0.06 & 0.07 & 0.08 & 0.10 & 0.13 & 0.16 & 0.22 & 0.29 & 0.37 & 0.15 \\ 
		$\hat{E}_{PC(3)}$ & 0.10 & 0.24 & 0.48 & 0.74 & 0.91 & 0.98 & 1.00 & 1.00 & 1.00 & 1.00 & 0.74 \\ 
		$\hat{E}_{c1}$ & 0.07 & 0.13 & 0.44 & 0.93 & 1.00 & 1.00 & 1.00 & 1.00 & 1.00 & 1.00 & 0.76 \\ 
		$\hat{E}_{c}$ & 0.07 & 0.14 & 0.43 & 0.92 & 1.00 & 1.00 & 1.00 & 1.00 & 1.00 & 1.00 & 0.76 \\ 
		$\hat{R}_{z1}$ & 0.16 & 0.90 & 1.00 & 1.00 & 1.00 & 1.00 & 1.00 & 1.00 & 1.00 & 1.00 & 0.91 \\ 
		$\hat{R}_{z}$ & 0.16 & 0.90 & 1.00 & 1.00 & 1.00 & 1.00 & 1.00 & 1.00 & 1.00 & 1.00 & 0.91 \\ 
		$\hat{R}_{z1s}$ & 0.13 & 0.87 & 1.00 & 1.00 & 1.00 & 1.00 & 1.00 & 1.00 & 1.00 & 1.00 & 0.90 \\ 
		$\hat{R}_{zs}$ & 0.13 & 0.86 & 1.00 & 1.00 & 1.00 & 1.00 & 1.00 & 1.00 & 1.00 & 1.00 & 0.90 \\ 
		$\hat{B}_{s}$ & 0.07 & 0.11 & 0.18 & 0.30 & 0.44 & 0.60 & 0.74 & 0.86 & 0.94 & 0.98 & 0.52 \\ 
		\hline
	\end{tabular}
\end{table}

\end{document}